\title{%When Synchronizability is %Really  Hard: 
Synchronizing Deterministic Push-Down Automata Can Be Really  Hard}
\titlerunning{Synchronizing DPDAs Can Be Really  Hard} % Is Undecidable} %TODO optional, please use if title is longer than one line
\author{Henning Fernau}{Universität Trier, Fachbereich IV, Informatikwissenschaften, 54296 Trier, Germany} {fernau@uni-trier.de}{https://orcid.org/0000-0002-4444-3220}{}
\author{Petra Wolf}{Universität Trier, Fachbereich IV, Informatikwissenschaften, 54296 Trier, Germany \and \url{https://www.wolfp.net/}}{wolfp@uni-trier.de}{https://orcid.org/0000-0003-3097-3906}{DFG project  
	FE 560/9-1}
\author{Tomoyuki Yamakami}
{University of Fukui, Faculty of Engineering,   3-9-1 Bunkyo, Fukui 910-8507,
	Japan }{tomoyukiyamakami@gmail.com}{}{}%TODO mandatory, please use full name; only 1 author per \author macro; first two parameters are mandatory, other parameters can be empty. Please provide at least the name of the affiliation and the country. The full address is optional
\authorrunning{H.\ Fernau, P.\ Wolf and T.\ Yamakami} %TODO mandatory. First: Use abbreviated first/middle names. Second (only in severe cases): Use first author plus 'et al.'
\keywords{Synchronizing automaton, Reset sequence, Real-time deterministic push-down automaton, Finite-turn push-down automaton, Computability, Computational complexity} %TODO mandatory; please add comma-separated list of keywords
\newcommand{\NP}{\textsf{NP}}
\newcommand{\PSPACE}{\textsf{PSPACE}}
\newcommand{\NPSPACE}{\textsf{NPSPACE}}
\newcommand{\NLOGSPACE}{\textsf{NLOGSPACE}}
\newcommand{\PTIME}{\textsf{P}}
\newcommand{\EXP}{\textsf{EXPTIME}}
\newcommand{\mto}[1][w]{\ensuremath{\overset{#1}{\longrightarrow}}}
\DeclareMathOperator{\lang}{\mathcal{L}}
\newcommand{\cerny}{{\v{C}}ern{\'{y}}}
\begin{document}

\maketitle

%TODO mandatory: add short abstract of the document
\begin{abstract}
The question if a deterministic finite automaton admits a software reset in the form of a so-called synchronizing word can be 
answered in polynomial time. In this paper, we extend this algorithmic question to deterministic automata beyond finite automata.
We prove that the question of synchronizability becomes undecidable even when looking at deterministic one-counter automata.
This is also true for another classical mild extension of regularity, namely that of deterministic one-turn push-down automata.
However, when we combine both restrictions, we arrive at scenarios with a \PSPACE-complete (and hence decidable) synchronizability problem.
Likewise, we arrive at a decidable synchronizability problem for (partially) blind deterministic counter automata.

There are several interpretations of what \emph{synchronizability} should mean for deterministic push-down automata. This is depending on the role of the stack: should it be empty on synchronization, should it be always the same or is it arbitrary?
For the automata classes studied in this paper, the complexity or decidability status of the synchronizability problem is mostly independent of this technicality, but we also discuss one class of automata where this makes a difference.
\end{abstract}

\section{Introduction}

The classical \emph{synchronization problem} asks, for a given deterministic finite automaton (DFA), if there exists a \emph{synchronizing word}, i.e., an input that brings all states of the automaton to a single state. While this problem is solvable in polynomial time~\cite{cerny1964,Vol2008,San2005}, many variants, such as synchronizing only a subset of states~\cite{San2005}, or synchronizing only into a specified subset of states~\cite{DBLP:journals/ipl/Rystsov83}, 
or synchronizing a partial automaton without taking an undefined transition~\cite{DBLP:journals/mst/Martyugin14}, are \PSPACE-complete. Restricting the length of a potential synchronizing word of some DFA by an integer parameter in the input also yields a harder problem, namely the \NP-complete short synchronizing word problem~\cite{Rys80,DBLP:journals/siamcomp/Eppstein90}.
The field of synchronizing automata has been intensively studied over the last years also in attempt to verify the famous \cerny\ conjecture, claiming that every synchronizable DFA admits a synchronizing word of quadratic length in the number of states~\cite{cerny1964,DBLP:journals/jalc/Cerny19,DBLP:journals/eik/Starke66b,DBLP:journals/jalc/Starke19}. We are far from solving this combinatorial question, as the currently best upper bound on this length is only cubic~\cite{DBLP:journals/jalc/Shitov19, DBLP:conf/stacs/Szykula18}.
For more on synchronization of DFAs and the \cerny\ conjecture, we refer to the surveys~\cite{Vol2008,beal_perrin_2016,JALC20}.

The idea of bringing an automaton to a well-defined state by reading a word, starting from any state, can be seen as implementing a software reset. This is why a synchronizing word is also sometimes called a \emph{reset word}. But this very idea is obviously not restricted to finite automata. 
In this work, we want to move away from deterministic finite automata to more general deterministic push-down automata.  %In~\cite{DBLP:journals/jalc/ChistikovMS19}, the authors note that to the best of their knowledge no extension to PDAs has been proposed to the date of publishing.
What should a synchronizing word mean in this context? Mikami and Yamakami first studied in~\cite{MY20} three different models, depending on requirements of the stack contents when a word~$w$ drives the automaton into a synchronizing state, irrespectively of the  state where processing~$w$ started: we could require  at the end (a) that the stack is always empty; or (b) that the stack contents is always the same (but not necessarily empty); or (c) that the
stack contents is completely irrelevant upon entering the synchronizing state. 
They demonstrated in~\cite{MY20} some upper and lower bounds on the maximum length of the shortest synchronizing word for those three models of push-down automata, dependent on the stack height. 
Here, we study these three models from a complexity-theoretic perspective.  However, as we show in our first main result, synchronizability becomes undecidable when asking about synchronizability in any of the stack models. 
Clearly, by restricting the length of a potential synchronizing word of some DPDA by an integer parameter (given in unary),
we can observe that the corresponding synchronization problems all become \NP-complete, as the hardness is trivially inherited from what is known about DFA synchronizability. Therefore, we will not consider such length-bounded problem variants any further in this paper. Yet, it remains interesting to observe that with DFAs, introducing a length bound on the synchronizing word means an increase of complexity, while for DPDAs, this introduction means dropping from undecidability close to feasibility. 
Beside general DPDAs, we will study these stack model variants of synchronization for sub-classes of DPDAs such as deterministic counter automata (DCA), deterministic  (partially) blind automata and finite-turn variants of DPDAs and DCAs.
In~\cite{vpda-crossref}, further restricted sub-classes of DPDAs, such as visibly and very visibly deterministic push-down and counter automata are considered. There, all considered cases are in \EXP\ and even membership in \PTIME\ and \PSPACE\ is observed, contrasting our undecidability results here.

Closest to the problems studied in our paper comes the  work of Chistikov et al.~\cite{DBLP:journals/jalc/ChistikovMS19} reviewed in the following, as their automaton model could be viewed as a special case of push-down automata, related to \emph{input-driven pushdown automata}~\cite{DBLP:conf/icalp/Mehlhorn80} which later became popular as \emph{visibly push-down automata}~\cite{DBLP:conf/stoc/AlurM04}. 
In~\cite{DBLP:journals/jalc/ChistikovMS19}, the synchronization problem for so-called \emph{nested word automata} (NWA) has been studied, where the concept of synchronization has been generalized to bringing all states to one single state such that for all runs the stack is empty (or in its start configuration) after reading the synchronizing word. In this setting the synchronization problem is solvable in polynomial time, whereas the short synchronizing word problem is \PSPACE-complete (here, the length bound is given in binary) and the question of synchronizing from or into a subset is \EXP-complete.

The DFA synchronization problem has been generalized in the literature to other automata models including infinite-state systems with infinite branching such as weighted and timed automata~\cite{DBLP:conf/fsttcs/0001JLMS14,DBLP:phd/hal/Shirmohammadi14} or register automata~\cite{babari2016synchronizing}. 
For instance, register automata are infinite state systems where a state consists of a control state and register contents. A synchronizing word for a register automaton brings all (infinitely many) states to the same state (and same register content). The synchronization problem for deterministic register automata (DRA) is \PSPACE-complete and \NLOGSPACE-complete for DRAs with only one register.

Finally, we want to mention that the term \emph{synchronization of push-down automata} has already some occurrences in the literature, i.e., in ~\cite{caucal2006synchronization,DBLP:journals/mst/ArenasBL11}, but here the term \emph{synchronization} refers to some relation of the input symbols to the stack behavior~\cite{caucal2006synchronization} or to reading different words in parallel~\cite{DBLP:journals/mst/ArenasBL11} and is not to be confused with our notion of synchronizing states.

We are presenting an overview on our results at the end of the next section, where we introduce our notions more formally.

\section{Definitions}

We refer to the empty word as $\epsilon$.
For a finite alphabet $\Sigma$ we denote with $\Sigma^*$ the set of all words over $\Sigma$ and with $\Sigma^+ = \Sigma\Sigma^*$ the set of all non-empty words.
For $i\in \mathbb{N}$ we set $[i] = \{1, 2, \ldots, i\}$.
For $w \in \Sigma^*$ we denote with $|w|$ the length of $w$, with $w[i]$ for $i \in [|w|]$ the $i$'th symbol of $w$, and with $w[i..j]$ for $i, j \in [|w|]$ the factor $w[i]w[i+1] \ldots w[j]$ of $w$. We call $w[1..i]$ a prefix and $w[i..|w|]$ a suffix of $w$. The reversal of $w$ is denoted by $w^R$, i.e., for $|w|=n$, $w^R = w[n]w[n-1]\dots w[1]$.

We call $A= (Q, \Sigma, \delta, q_0, F)$ a \emph{deterministic finite automaton} (DFA for short) if $Q$ is a finite set of states, $\Sigma$ is a finite input alphabet, $\delta$ is a transition function $Q \times \Sigma \to Q$, $q_0$ is the initial state and $F \subseteq Q$ is the set of final states. The transition function $\delta$ is generalized to words by $\delta(q, w) = \delta(\delta(q, w[1]), w[2..|w|])$ for $w \in \Sigma^*$.
A word $w\in \Sigma^*$ is accepted by $A$ if $\delta(q_0, w) \in F$ and the language accepted by $A$ is defined by $\lang(A) = \{w \in \Sigma^* \mid \delta(q_0, w) \in F\}$.
We extend $\delta$ to sets of states  $Q'\subseteq Q$  or to sets of letters $\Sigma'\subseteq \Sigma$, letting  $\delta(Q',\Sigma')=\{\delta(q',\sigma')\mid  (q',\sigma')\in Q'\times\Sigma'\}$. Similarly, we may  write $\delta(Q',\Sigma')=p$ to define $\delta(q',\sigma')=p$ for each $(q',\sigma')\in Q'\times\Sigma'$. 
%We may summarize $\delta$-transitions with the same image by replacing states and letters by sets thereof.\todo[color=magenta]{This is not how we use it (at least not only)}
The synchronization problem for DFAs (called \textsc{DFA-Sync}) asks for a given DFA $A$ whether there exists a synchronizing word for $A$. A word $w$ is called a \emph{synchronizing word} for a DFA $A$ if it brings all states of the automaton to one single state, i.e., $|\delta(Q, w)| = 1$.

We call $M = (Q, \Sigma, \Gamma, \delta, q_0, \bot, F)$ a \emph{deterministic push-down automaton} (DPDA for short) if $Q$ is a finite set of states; the finite sets $\Sigma$ and $\Gamma$ are the input and stack alphabet, respectively; $\delta$ is a transition function $Q \times \Sigma \times \Gamma \to Q \times \Gamma^*$; 
 $q_0$ is the initial state; $\bot \in \Gamma$ is the stack bottom symbol which is only allowed as the first (lowest) symbol in the stack, i.e., if $\delta(q,a,\gamma)=(q',\gamma')$ and $\gamma'$ contains $\bot$, then $\bot$ only occurs in $\gamma'$ as its prefix and moreover, $\gamma=\bot$; and $F$ is the set of final states. 
%We say that a letter $\sigma$ acts on a state $q$ as an internal letter, if $\delta(q, \sigma, \gamma) = (q', \gamma)$, 
We will only consider real-time push-down automata and forbid $\epsilon$-transitions, as can be seen in the definition.
Notice that the bottom symbol can be removed, but then the computation gets stuck.
%\todo[inline]{One of the subtleties of DPDA definitions: In this def., the bottom symbol can be removed, but then the computation gets stuck, as by this definition, each step reads some symbol from the stack, which is impossible now. Is this intended?\\
%\textcolor{magenta}{Yes, I learned it this way, i.e., a PDA can implicitly reject by aborting the computation in this way. Do you prefer it otherwise?}} 

Following~\cite{DBLP:journals/jalc/ChistikovMS19}, a \emph{configuration} of $M$ is a tuple $(q, \upsilon) \in Q \times \Gamma^*$.
%\todo[inline]{HF: I think that it is usual to include the remainder of the input word in the notion of a configuration, but you copied this from somewhere?\textcolor{magenta}{Yes from~\cite{DBLP:journals/jalc/ChistikovMS19}}}
 For a letter $\sigma \in \Sigma$ and a stack content $\upsilon$ with $|\upsilon| = n$ we write $(q, \upsilon) \mto[\sigma] (q', \upsilon[1..(n-1)]\gamma)$ if $\delta(q, \sigma, \upsilon[n]) = (q', \gamma)$.
This means that the top of the stack $\upsilon$ is the right end of  $\upsilon$.  
We also denote with $\longrightarrow$ the reflexive transitive closure of the union of $\mto[\sigma]$ over all letters in $\Sigma$. The input words on top of $\longrightarrow$ are concatenated accordingly, so that $\longrightarrow\, =\bigcup_{w\in\Sigma^*}\mto[w]$.
The language $\lang(M)$ accepted by a DPDA $M$ is %the set 
$\lang(M) = \{w \in \Sigma^* \mid (q_0, \bot) \mto[w] (q_f, \gamma), q_f \in F\}$.
We call the sequence of configurations $(q, \bot) \mto[w] (q', \gamma)$ the \emph{run} induced by $w$, starting in $q$, and ending in $q'$. We might also call $q'$ the \emph{final state} of the run. 
%For a state $q$ and word $w\in \Sigma^*$ we denote with $\stack(q, w)$ the content of the stack, from bottom to top written from left to right, if $w$ is read on a path starting in $q$ with only the bottom symbol $\bot$ on the stack.
%We generalize $\delta$ on letters to $\delta^*$ on words in the following way:
%For $q \in Q, w \in \Sigma^*, \gamma \in \Gamma^*$ let $\delta^*(q, w, \gamma) = q'$ if there exists a sequence of transitions 
%%$\delta^1, \delta^2, \dots, \delta^n$ 
%and corresponding stack contents such that $\stack(q, \epsilon) = \gamma$, $\delta(q, w[1], \gamma[|\gamma|]) = (q_1, \sigma_1)$, $\stack(q, w[1]) = \gamma_1$, $\delta(q_1, w[2], \gamma_1[|\gamma_1|]) = (q_2, \sigma_2)$, $\stack(q, w[1..2]) = \gamma_2$, $\dots$, $\delta(q_{n-1}, w[|w|], \gamma_{n-1}[|\gamma_{n-1}|]) = (q', \sigma_n), stack(q, w) = \gamma'$. \todo[color=magenta]{Better definition for $\delta^*$?}

We will discuss three different concepts of synchronizing DPDAs. For all concepts we demand that a synchronizing word $w \in \Sigma^*$ maps all states, starting with an empty stack, to the same synchronizing state, i.e., for all
$q, q' \in Q \colon (q, \bot) \mto (\overline{q}, \upsilon), (q', \bot) \mto (\overline{q}, \upsilon')$. In other words, for a synchronizing word all runs started on some states in $Q$ end up in the same final state.
% $p, q \in Q \colon \delta^*(p, w, \bot)  = \delta^*(q, w, \bot)$. 
 In addition to synchronizing the states of a DPDA we will consider the following two conditions for the stack content: %\\
 (1) $\upsilon = \upsilon' = \bot$, %\\
 (2) $\upsilon = \upsilon'$. %\\
%(1) $\stack(p, w, \bot) = \stack(q, w, \bot) = \bot$
%(2) $\stack(p, w, \bot) = \stack(q, w, \bot)$. 
We will call (1) the \emph{empty stack model} and (2) the \emph{same stack model}. In the third case, we do not put any restrictions on the stack content and call this the \emph{arbitrary stack model}.

\noindent
As we are only interested in synchronizing a DPDA, we can neglect the start and final states.

As mentioned above, we will show that synchronizability of DPDAs is undecidable, which is in stark contrast to the situation with DFAs, where this problem is solvable in polynomial time. Hence, it is interesting to discuss deterministic variants of classical sub-classes of context-free languages. In this paper, we focus on one-counter languages and on linear languages and related classes.
%Apart from DPDAs we will also discuss the sub-classes of deterministic one counter automata.
 A \emph{deterministic (one) counter automaton} (DCA) is a DPDA where $|\Gamma\backslash\{\bot\}| = 1$. Note that our DCAs can perform zero-tests by checking if the bottom-of-stack symbol is on top of the stack. As we will see that also with this restriction, synchronizability is still undecidable, we further restrict them to the \emph{partially blind} setting~\cite{greibach1978blind}. This means in our formalization  that a transition $\delta(q,\sigma,x)=(q',\gamma)$ either satisfies $\gamma=\epsilon$ for both $x=1$ and $x = \bot$, or $x$ is a prefix of~$\gamma$, i.e., $\gamma=x\gamma'$, and then both $\delta(q,\sigma,1)=(q',1\gamma')$ (for $\Gamma=\{1,\bot\}$) and $\delta(q,\sigma,\bot)=(q',\bot\gamma')$. The situation is even more delicate with one-turn or, more general, finite-turn DPDAs, whose further discussion and formal definition we defer to the specific section below.
 
We are now ready to define a family of synchronization problems, the complexity of which will be our subject of study in the following chapters.
\begin{definition}[\sc Sync-DPDA-Empty]
	\ \\
	Given: DPDA $M = (Q, \Sigma, \Gamma, \delta, \bot)$.\\
	Question: Does there exist a word $w \in \Sigma^*$ that synchronizes $M$ in the empty stack model?
\end{definition}
For the same stack model, we refer to the synchronization problem above as \textsc{Sync-DPDA-Same} and as \textsc{Sync-DPDA-Arb} in the arbitrary stack model. Variants of these problems are defined by replacing the DPDA in the definition above by a DCA, a deterministic partially blind counter automaton (DPBCA), or by adding turn restrictions, in particular, whether the automaton is allowed to make zero or one turns of its stack movement.
%a deterministic blind counter automaton (DBCA).

\subsection*{Outlook and summary of the paper}

We summarize our results in Table~\ref{tab:DPDA-results}. In short, while already seemingly innocuous extensions of finite automata (with counters or with 1-turn push-downs) result in an undecidable synchronizability problem, some extensions 
do offer some algorithmic synchronizability checks, although nothing efficient. At the end, we show how to apply some of our techniques to synchronizability questions concerning sequential transducers.

\begin{table}[tb]
	\begin{tabular}{lccc}
		class of automata/problem & empty stack model & same stack model & arbitrary stack model\\
		\toprule
		DPDA & undecidable & undecidable & undecidable\\
		1-Turn-Sync-DPDA & undecidable & undecidable & undecidable\\
		0-Turn-Sync-DPDA & \PSPACE-complete & undecidable & \PSPACE-complete\\
		DCA & undecidable & undecidable & undecidable\\
		1-Turn-Sync-DCA & \PSPACE-complete & \PSPACE-complete &\PSPACE-complete\\
		0-Turn-Sync-DCA & \PSPACE-complete & \PSPACE-complete & \PSPACE-complete\\
		DPBCA & decidable & decidable & decidable\\
		%DBCA & decidable & decidable & decidable\\
	\end{tabular}
	\caption{Complexity status of the synchronization problem for different classes of deterministic real-time push-down automata in different stack synchronization modes as well as finite-turn variants of the respective synchronization problem.}
	\label{tab:DPDA-results}
\end{table}

As an auxiliary result for proving undecidability of finding 1-turn synchronizing words for real-time  deterministic push-down automata, we also prove  undecidability of the inclusion and intersection non-emptiness problems for these automata, which could be an interesting result on its own. We also showcase how a  variant of DFA synchronization, called \textsc{DFA-Sync-From-Into-Subset}, is useful to prove membership in \PSPACE. As with this type of problems, these membership proofs are sometimes technical, this could be helpful in similar settings.
To understand this problem, first look at the following one, called \textsc{Global Inclusion Problem for Non-Initial Automata} in~\cite{DBLP:journals/ipl/Rystsov83}:
%\todo[inline, color=magenta]{\textsc{0-Turn-Sync-DPDA-Arb} is still open. TODO; Guess: PSPSACE}
%In the following, we reduce from the \PSPACE-complete problem \textsc{DFA-Sync-Into-Subset}~\cite{DBLP:journals/ipl/Rystsov83,San2005}.
\begin{definition}[\textsc{DFA-Sync-Into-Subset} (\PSPACE-complete, see Theorem~2.1 in~\cite{DBLP:journals/ipl/Rystsov83})]
	\ \\
	Given: DFA $A = (Q, \Sigma, \delta)$, subset $S\subseteq Q$.\\
	Question: Is there a word $w\in \Sigma^*$ such that $\delta(Q, w) \subseteq S$?
\end{definition}
%This problem has been called \textsc{Global Inclusion Problem for Non-Initial Automata} in~\cite{DBLP:journals/ipl/Rystsov83}; in particular, see Theorem~2.1 for the proof of \PSPACE-hardness.
%\todo[inline]{When discussing synchronization of DFAs, we should not forget the careful and the set-models; maybe we could also state the "new general set-model" here and sketch why it is in PSPACE. Confer Def.~\ref{def-DFA-Sync-Into-Subset}.}
% We introduce\todo[color=magenta]{Again, z.B. Chirstikov nennt das auch, schau mal noh bei den weakly acyclic sync Sachen von Ryzhikov nach} the following generalization.
\begin{definition}[\textsc{DFA-Sync-From-Into-Subset}]
	\ \\
	Given: DFA $A = (Q, \Sigma, \delta)$, subsets $S_0,S_1\subseteq Q$.\\
	Question: Is there a word $w\in \Sigma^*$ such that $\delta(S_0, w) \subseteq S_1$?
\end{definition}

Using the previously mentioned \PSPACE-hardness and adapting a subset construction as in \cite[Theorem 1.22]{San2005},
which then boils down to a reachability problem, one can show:
\begin{proposition}\label{Sync-From-Into-Subset}
\textsc{DFA-Sync-From-Into-Subset} is \PSPACE-complete.
\end{proposition}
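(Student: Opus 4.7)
The plan is to handle hardness and membership separately, with the hardness essentially free and the membership relying on a standard power-set reachability argument.

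For \PSPACE-hardness, observe that \textsc{DFA-Sync-Into-Subset} is exactly the restriction of \textsc{DFA-Sync-From-Into-Subset} obtained by fixing $S_0 = Q$. Hence the identity map on inputs is a trivial (log-space) reduction from \textsc{DFA-Sync-Into-Subset}, which is \PSPACE-complete by Theorem~2.1 of~\cite{DBLP:journals/ipl/Rystsov83}, to our problem. This settles the lower bound immediately.

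For membership in \PSPACE, I would perform the usual subset construction over $A$: build (implicitly) the power-set automaton $\mathcal{P}(A)$ whose states are subsets of $Q$, whose input alphabet is $\Sigma$, and whose transition on letter $\sigma$ sends $T$ to $\delta(T,\sigma)$. The question becomes: is there a directed path in $\mathcal{P}(A)$ from the node $S_0$ to some node $T$ with $T \subseteq S_1$? This is a reachability problem in a graph with at most $2^{|Q|}$ vertices whose successors are computable in polynomial time from a given vertex and a letter. Therefore an \NPSPACE\ procedure suffices: guess $w$ one letter at a time, maintain only the current subset $T \subseteq Q$ (which takes $|Q|$ bits), update $T := \delta(T,\sigma)$ for each guessed $\sigma$, keep a step counter bounded by $2^{|Q|}$ (storable in $|Q|$ bits), and accept as soon as $T \subseteq S_1$. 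By Savitch's theorem, \NPSPACE{} $=$ \PSPACE, which yields the upper bound.

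I do not foresee a real obstacle: the only thing to double-check is that restricting to paths of length at most $2^{|Q|}$ loses no solutions, which holds because any shorter witness in $\mathcal{P}(A)$ can be obtained by cycle removal on a longer one, and the total number of distinct subsets of $Q$ is exactly $2^{|Q|}$. Together with the hardness part, this proves \PSPACE-completeness.
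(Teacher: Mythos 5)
Your proposal is correct and matches the paper's own proof: hardness by specializing $S_0 = Q$ to recover \textsc{DFA-Sync-Into-Subset}, and membership by a nondeterministic subset-tracking walk in polynomial space combined with Savitch's theorem. The only cosmetic difference is your explicit step counter bounding the path length by $2^{|Q|}$, which the paper leaves implicit.
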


\noindent
The reader can find proofs for this and several other assertions in the appendix of this paper.

\section{General DCAs and  DPDAs: When Synchronizability is Really Hard}

%\begin{remark}
	The inclusion problem for deterministic real-time one counter automata that can perform zero-tests is undecidable~\cite{10.1007/978-3-642-22993-0_20,minsky1961recursive}. This result is used to prove undecidability of synchronization in any general setting as the main result of this section.
However, there are special cases of DPDAs and DCAs that have a decidable inclusion problem (see~\cite{DBLP:journals/ieicet/HiguchiWT95} as an example) so that this argument does not apply to these sub-classes. We will have a closer look at some of these sub-classes in the following sections.
%If the automaton cannot perform zero-tests the problem becomes decidable~\cite{DBLP:journals/ieicet/HiguchiWT95}.\todo[inline]{What do you mean exactly by saying that a counter automaton cannot perform these tests? Is this automatically referring to some kind of blindness condition? \textcolor{magenta}{Yes, but from my point of view this sentence can go.} Well, discussion and references is always nice, do you have a copy of that paper?} (The same result is discussed in~\cite{totzke2014inclusion} more from a model-checking perspective.)
%\end{remark}
\begin{theorem}
	\label{thm:DCA-Sync}
	The problems \textsc{Sync-DCA-Empty}, \textsc{Sync-DCA-Same}, and \textsc{Sync-DCA-Arb} are undecidable.
\end{theorem}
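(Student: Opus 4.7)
My plan is to reduce from the complement of the inclusion problem $\lang(M_1)\subseteq\lang(M_2)$ for deterministic real-time one-counter automata with zero-tests, which is undecidable by the cited results. Given two such DCAs $M_1=(Q_1,\Sigma,\Gamma,\delta_1,q_{0,1},\bot,F_1)$ and $M_2=(Q_2,\Sigma,\Gamma,\delta_2,q_{0,2},\bot,F_2)$, I construct a single DCA $M$ whose synchronizing words (in any of the three stack models) correspond to witnesses of $\lang(M_1)\not\subseteq\lang(M_2)$.

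The construction takes $M$ with state set $Q = Q_1 \sqcup Q_2 \sqcup \{s\}$, where $s$ is a new sync target, and alphabet $\Sigma\cup\{\$,\#\}$. On input letters from $\Sigma$, each $Q_i$-side behaves exactly like $M_i$, so a $Q_i$-starting state runs an exact simulation of $M_i$. The reset letter $\$$ sends every $q\in Q_i$ to $q_{0,i}$; the check letter $\#$ sends every state in $F_1\cup(Q_2\setminus F_2)$ to $s$; and $s$ is a sink whose $\#$-transition additionally pops a counter symbol. A prospective synchronizing word takes the canonical form $\$\,w\,\#^{K}$ for some $w\in\Sigma^*$ and large $K$: the $\$$-step collapses every state in $Q_i$ onto $(q_{0,i},\bot)$; the subsequent $w$-steps simulate $M_1$ from $q_{0,1}$ on $Q_1$-starts and $M_2$ from $q_{0,2}$ on $Q_2$-starts, ending in $(r_1,c_1)$ respectively $(r_2,c_2)$; and the final $\#^{K}$ reaches $s$ precisely when $r_1\in F_1$ and $r_2\notin F_2$, while draining both counters to $\bot$. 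This is exactly the condition $w\in\lang(M_1)\setminus\lang(M_2)$.

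The chief obstacle is the converse direction, since a synchronizing word need not take this canonical form. A naive multi-round word $\$\,w_1\,\#\,\$\,w_2\,\#\,\dots$ would let the $Q_1$-side reach $s$ via some $w_{j_1}\in\lang(M_1)$ in one round and the $Q_2$-side reach $s$ via a different $w_{j_2}\notin\lang(M_2)$ in another round, collapsing the problem to the decidable conjunction of emptiness of $\lang(M_1)$ and non-universality of $\lang(M_2)$. I prevent this by using the counter as a commitment device: the very first $\$$ pushes a marker onto the counter, so any subsequent $\$$-letter encounters a non-empty counter and acts as a controlled no-op that does not restart the state. To compensate for the resulting shift in the counter throughout the $M_i$-simulation, I duplicate every state $q\in Q_i$ into ``at the committed bottom'' and ``above the committed bottom'' copies, preserving the zero-test semantics of $M_i$; this bookkeeping is the central technicality of the reduction.

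Finally, all three stack models are handled by the same construction. In the arbitrary stack model only the state needs to coincide, so reaching $s$ suffices. For the same and empty stack models, the $\#$-loops at $s$ each pop one counter symbol, and by appending enough $\#$-letters both the $Q_1$- and $Q_2$-trajectories end at $(s,\bot)$ regardless of $c_1$ and $c_2$. Hence a single synchronizing word works uniformly in all three models, yielding undecidability of \textsc{Sync-DCA-Empty}, \textsc{Sync-DCA-Same}, and \textsc{Sync-DCA-Arb} simultaneously.
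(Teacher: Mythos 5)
Your overall plan is sound and close in spirit to the paper's: the paper reduces from intersection non-emptiness of real-time DCAs rather than from non-inclusion, but both source problems are undecidable and both lead to the same architecture (a reset letter, a check letter, a sink state, and counter-draining at the sink to cover all three stack models). The forward direction of your reduction is fine. The problem is in the converse direction, and it is not the problem you identified.

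The genuine gap is that you never specify what the check letter $\#$ does on states in $(Q_1\setminus F_1)\cup F_2$. Your own discussion of the multi-round word $\$w_1\#\$w_2\#\dots$ tacitly assumes $\#$ is survivable there (otherwise the $Q_2$-side could not ``reach $s$ in another round''), and under that reading the reduction is simply wrong: your commitment device stops $\$$ from restarting the state, but it does not stop the $M_2$-simulation from \emph{riding through} a $\#$ and continuing. Concretely, take $\lang(M_1)=\lang(M_2)=\{a\}$, so inclusion holds; the word $\$a\#a\#^K$ still synchronizes, because the $Q_1$-run enters $s$ at the first $\#$ while the $Q_2$-run treats that $\#$ as a no-op, reads one more $a$ to leave $F_2$, and enters $s$ at the second $\#$. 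The condition your construction actually tests is ``$\exists w_1\in\lang(M_1)$ and $\exists w'$ with $w_1w'\notin\lang(M_2)$,'' which is not non-inclusion. The missing idea — and the one the paper uses — is to make the check letter \emph{deadly} on wrong states (send them to a trap while incrementing the counter, so the trap can never be left); then the first $\#$ already certifies that a single word lies in $\lang(M_1)\setminus\lang(M_2)$, and most of your commitment machinery becomes unnecessary. A secondary flaw in that machinery: you key ``first $\$$'' on the counter being empty, but nothing forces the first $\$$ of the word to be read with an empty counter by every run (the preceding $\Sigma$-letters may have pushed), so different runs can commit at different positions and the two simulations can end up evaluated on different effective words. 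The paper sidesteps this by making the reset letter deadly on a non-empty counter, guaranteeing that every surviving run is in its exact initial configuration immediately after every reset.
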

\begin{proof}
	%	Idea: Find a reduction from intersection-non-emptyness of DOCA (deterministic one counter automaton).
	We give a reduction from the undecidable intersection non-emptiness problem for real-time DCAs~\cite{10.1007/978-3-642-22993-0_20}.
	Let $M_1 = (Q_1, \Sigma, \{1, \bot\}, \delta_1, q^1_0, \bot, F_1)$ and $M_2 = (Q_2, \Sigma, \{1, \bot\}, \delta_2, q^2_0, \bot, F_2)$ be two DCAs over the same input alphabet with disjoint state sets. We construct a DCA $M_S = (Q_1 \cup Q_2 \cup \{q^1_{f}, q^2_f, q_s\}, \Sigma \cup \{a, b\}, \{1, \bot\}, \delta, \bot)$, where we neglect start and final states, which is synchronizable in the empty stack model if and only if the DCAs $M_1$ and $M_2$ accept a common word. 
	%We then adapte the construction to achieve the same result for the same stack and arbitrary stack model.
	The same construction also works for the same stack and arbitrary stack models.
	We assume $\{q^1_{f}, q^2_f, q_s\} \cap (Q_1 \cup Q_2) = \emptyset$ and $\{a, b\} \cap \Sigma = \emptyset$.
	For the states in $Q_1$ and~$Q_2$, the transition function $\delta$ agrees with $\delta_1$ and $\delta_2$ for all letters in $\Sigma$.
	In the following, let $i \in \{1, 2\}$. For $q \in Q_i$, we set $\delta(q, a, \bot) = (q^i_0, \bot)$ and $\delta(q, a, 1) = (q^i_f, 1)$. 
	Further, for $q \in Q_i \backslash F_i$ we set $\delta(q, b, 1) = (q^i_f, 1)$ and $\delta(q, b, \bot) = (q^i_f, \bot1)$. For $q \in F_i$, we set $\delta(q, b, 1) = (q_s, 1)$ and $\delta(q, b, \bot) = (q_s, \bot)$.
	For $q^i_f$, we set $\delta(q^i_f, a, \bot) = (q^i_0, \bot)$ with all other transitions we stay in $q^i_f$ and increase the counter. Hence, the state $q^i_f$ can only be left with an empty counter and this is only the case if no letter other than $a$ has been read before.
	For the state $q_s$, we set $\delta(q_s, \Sigma\cup\{a, b\}, \bot) = (q_s, \bot)$, and $\delta(q_s, \Sigma \cup \{a, b\}, 1) = (q_s, \epsilon)$.
	
	First, assume there is a word $w \in \lang(M_1) \cap \lang(M_2)$. Then, the word $awb$ synchronizes all states of the DCA $M_S$ into the state $q_s$. Let $l_1, l_2$ be stack contents such that $(q^1_0, \bot) \mto[awb] (q_s, l_1)$ and $(q^2_0, \bot) \mto[awb] (q_s, l_2)$. Let $l = \max(|l_1|, |l_2|)$. Then $awbb^l$ synchronizes $M_S$ in the empty stack model.
	
	For the other direction assume there exists a word $w \in (\Sigma \cup \{a, b\})^*$ that synchronizes $M_S$ in the empty stack model. The states $q^1_f$ and $q^2_f$ forces $w[1] = a$ since otherwise these states cannot be left. Since the state $q_s$ has no outgoing transition, it must be our synchronizing state. In order to reach it, $w$ must contain at least one letter $b$. Let $m \in [|w|]$ be an index such that $w[m] = b$ and for $j < b$, $w[j] \neq b$. With a letter $b$ we move from all final states to the state $q_s$ and from all non-final states of $M_1$ and $M_2$ we go to a state $q^i_f$ and increase the counter. As we cannot leave the states $q^i_f$ if we reach them once with a non-empty counter, reading a $b$ from a non-final state causes the automaton to reach a configuration from which we no longer can synchronize the automaton. Hence, we know that after reading $w[1..m]$ all active states are in the set $F_1 \cup F_2 \cup \{q_s\}$. Let $\ell \in [|w|]$ be an index with $\ell < m$ with $w[\ell] = a$ such that for $\ell < i <m$, $w[i] \neq a$. Then $w[\ell+1\, ..\,m-1]$ is a word which is accepted by both DCAs $M_1$ and $M_2$.
	
	It is easy to see that clearing the stacks in state $q_s$ is not crucial and hence the reduction also works for the same stack and arbitrary stack models.
\end{proof}
\begin{corollary}
	The problems \textsc{Sync-DPDA-Empty}, \textsc{Sync-DPDA-Same}, and \textsc{Sync-DPDA-Arb} are undecidable.\qed
\end{corollary}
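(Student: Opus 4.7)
The plan is essentially to observe that the corollary follows with no additional work from Theorem~\ref{thm:DCA-Sync} by a trivial inclusion argument. A DCA is by definition a DPDA whose stack alphabet has the special form $\Gamma=\{1,\bot\}$; in particular, every instance of \textsc{Sync-DCA-Empty}, \textsc{Sync-DCA-Same}, or \textsc{Sync-DCA-Arb} is, syntactically, also an instance of the corresponding DPDA-synchronization problem, with exactly the same set of synchronizing words in each of the three stack models. So the identity map is a (trivial, polynomial-time) reduction from each DCA-synchronization problem to the corresponding DPDA-synchronization problem.

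More precisely, I would spell out the three reductions in one sentence each: given a DCA $M=(Q,\Sigma,\{1,\bot\},\delta,\bot)$, view it as the DPDA $M'=(Q,\Sigma,\{1,\bot\},\delta,\bot)$, and observe that configurations, runs, and hence the sets of synchronizing words in each stack model coincide for $M$ and $M'$. Therefore $M$ is a yes-instance of \textsc{Sync-DCA-}$X$ if and only if $M'$ is a yes-instance of \textsc{Sync-DPDA-}$X$, for $X\in\{\text{Empty},\text{Same},\text{Arb}\}$. Undecidability of the three DPDA problems then inherits directly from Theorem~\ref{thm:DCA-Sync}.

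There is really no obstacle here; the only thing to be slightly careful about is making explicit that the three notions of synchronization (empty stack, same stack, arbitrary stack) are defined purely in terms of the configurations reachable from states with initial stack $\bot$, and that these configurations are identical whether one regards $M$ as a DCA or as a DPDA. Once this is noted, the corollary is immediate, which is why the paper simply closes it with \qed after the statement.
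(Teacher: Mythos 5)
Your proposal is correct and is exactly the paper's intended argument: the corollary carries a \qed because it follows immediately from Theorem~\ref{thm:DCA-Sync} by inclusion of automata classes, since every DCA is syntactically a DPDA with the same configurations, runs, and synchronizing words in all three stack models. Nothing further is needed.
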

%\begin{proof}	Since every DCA is also a DPDA, the claim follows. \end{proof}

How can we overcome the problem that, even for deterministic one-counter languages, the synchronizability problem is undecidable?
One of the famous further restrictions are (partially) blind counters, to which we turn our attention next.

\section{Partially Blind Deterministic Counter Automata}

%\todo[inline]{HF: Not sure if we need an extra section, but for now it adds some clarity.}

The blind and partially blind variations of counter automata have been introduced by Greibach in~\cite{greibach1978blind}.
She already noticed that the emptiness problem for such automata (even with multiple counters) is decidable should the reachability problem for 
vector addition systems, also known as Petri nets, be decidable, which has been proven some years later \cite{Mayr1981reachability,Kosaraju1982reachability}; its non-elementary complexity has only been recently fully understood~\cite{czerwinskietal2019reachability}. Although we will stick to the models introduced so far in the following statements and proofs, we want to make explicit that our decidability results also hold for deterministic multi-counter automata. But as we focus on discussing families of automata describing languages between regular and context-free, we refrain from giving further details here.%\todo[color=magenta]{haben wir echt Entscheidbarkeit für sync von nem Automaten Modell mit Sprachklasse über Kontextfrei?}

Because partially blind counters can simulate blind counters, our results hold for blind counters as well, but we make them explicit only for the partially blind case. One formal reason is that we want to preserve our stack model, while it becomes awkward to formalize blind counters in this stack model (see Appendix).

Recall that a partially blind counter automaton will get blocked when its counter gets below zero. 
The blindness refers to the fact that such a machine can never explicitly test its counter for zero.
This translates into our formalization by requiring that a transition $\delta(q,\sigma,x)=(q',\gamma)$ either means that $x\neq\bot$ or $x$ is a prefix of $\gamma$, i.e., $\gamma=x\gamma'$, and then both $\delta(q,\sigma,1)=(q',1\gamma')$ (for $\Gamma=\{1,\bot\}$) and $\delta(q,\sigma,\bot)=(q',\bot\gamma')$, i.e., 
the processing will somehow perform the same action, irrespectively of the stack contents, whenever this is possible; however, the machine will 
stop if it is
trying to pop the bottom-of-stack symbol. As a specialty, such automata accept when having arrived in a final state together with having zero in its counter. We call a deterministic partially blind (one-)counter automaton a DPBCA.
% and a deterministic blind counter automata a DBCA.
%\todo[inline]{Check if this is doing what we want; I do not recall having seen this partial blindness defined this fancy way.}

\begin{theorem}
	\label{thm:DPBCA-Sync}
	The problems \textsc{Sync-DPBCA-Empty}, \textsc{Sync-DPBCA-Same}, and \textsc{Sync-DPBCA-Arb} are decidable.
\end{theorem}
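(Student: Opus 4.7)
The plan is to reduce each of the three synchronizability questions for a DPBCA to a reachability problem in a vector addition system with states (VASS), equivalently a Petri net with states, whose decidability is the classical result of Mayr and Kosaraju already cited in the paper. The key enabling feature is the partially blind semantics: since transitions on a letter $\sigma$ depend only on $\sigma$ (not on zero-tests of the counter), parallel runs from different starting states can be combined into a single multi-counter machine without any auxiliary finite-state control for zero-testing.

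Given a DPBCA $M = (Q, \Sigma, \{1, \bot\}, \delta, \bot)$, my first step is to build a product construction $P$ that simulates $n := |Q|$ independent copies of $M$ in parallel on the same input. The control states of $P$ are tuples $(p_1, \ldots, p_n) \in Q^n$, where $p_i$ records the current state of the copy that was started in the $i$-th state of $Q$, and $P$ carries $n$ non-negative counters, one per copy. On reading a letter $\sigma$, the product applies $\delta$ componentwise: each counter is left unchanged, incremented, or decremented exactly as prescribed by $\delta(p_i, \sigma, \cdot)$, and the joint transition is enabled only when every componentwise decrement is applicable. This is precisely the semantics of an $n$-counter VASS. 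A word $w$ drives $P$ from the initial configuration (each coordinate holding the corresponding start state of $Q$ and its counter set to $0$) to some $((r_1, \ldots, r_n); (c_1, \ldots, c_n))$ exactly when, for every $i$, the run of $M$ on $w$ started in the $i$-th state with empty counter ends in $r_i$ with counter value $c_i$.

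Synchronizability then becomes a reachability property of $P$, and I would iterate over the $|Q|$ candidate synchronizing states $\overline{q} \in Q$. For the arbitrary stack model, ask whether some configuration with control state $(\overline{q}, \ldots, \overline{q})$ is reachable; this is control-state reachability, a classically decidable sub-problem of VASS reachability. For the empty stack model, ask whether the specific configuration $((\overline{q}, \ldots, \overline{q}); \mathbf{0})$ is reachable, which is a direct instance of VASS reachability. For the same stack model, I would augment $P$ with a small gadget: a fresh control state $s_{\overline{q}}$ reached from $(\overline{q}, \ldots, \overline{q})$ by a silent transition, together with a single self-loop at $s_{\overline{q}}$ that simultaneously decrements every one of the $n$ counters by one. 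This self-loop is enabled iff all counters are strictly positive, so $(s_{\overline{q}}; \mathbf{0})$ is reachable iff $P$ first enters $(\overline{q}, \ldots, \overline{q})$ with equal counter values; hence equality reduces to plain VASS reachability in the augmented system.

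The main obstacle is packaging the partially blind semantics faithfully as a VASS, especially in the product, and engineering a reachability encoding for the same-stack case without extending the model; the simultaneous-decrement gadget handles the latter cleanly, while the absence of zero-tests in $M$ handles the former. With at most $|Q|$ reachability queries per stack model and each query reducing to VASS reachability, decidability of all three problems follows, inheriting the (non-elementary) complexity of VASS reachability. The same construction extends verbatim to deterministic partially blind multi-counter automata, matching the remark the paper makes just above the theorem.
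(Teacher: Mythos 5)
Your proposal is correct and follows essentially the same route as the paper: the $|Q|$-fold product yielding a partially blind multi-counter machine (a VASS), with empty-stack synchronization becoming reachability of the zero configuration in a diagonal control state, the same-stack case handled by a final simultaneous decrement of all counters, and the arbitrary case by control-state reachability (the paper equivalently appends unsynchronized single-counter decrements), all resting on decidability of Petri net reachability. The only cosmetic difference is that you issue one reachability query per candidate synchronizing state, whereas the paper takes the whole diagonal as the accepting set in a single emptiness test.
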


We are not specifying the complexity here, but only mention that we are using, in the end, the reachability problem for Petri nets, which 
is known to be decidable, but only with a non-elementary complexity; see \cite{Mayr1981reachability,Kosaraju1982reachability,czerwinskietal2019reachability}. However, we leave it as an open question if the synchronization complexity of DPBCAs is non-elementary. When looking into this question more in details, the number of states of the counter automaton could be a useful parameter to be discussed, as it influences the number of counters of the partially blind multi-counter automaton that we construct in our proof in order to show the claimed decidability result.

\begin{proof}  Let $M = (Q, \Sigma, \{1, \bot\}, \delta, q_0, \bot, F)$ be some DPBCA. 
Let us first describe the case  \textsc{Sync-DPBCA-Empty}.
%, because it is the most general one. \todo[color=magenta]{But also the least restrictive and hence maybe easier than -Empty and -Same which add additional restrictions.}
Here, we can first produce the multi-counter $|Q|$-fold product automaton $M^{|Q|}$ from $M$ that starts, assuming $Q=\{q_0,\dots,q_{|Q|-1}\}$, in the 
state $(q_0,\dots,q_{|Q|-1})$. Notice that  $M^{|Q|}$ has $|Q|^{|Q|}$ many states and operates $|Q|$ many counters. 
We could take as the set of final states $F^{|Q|}=\{(q,\dots,q)\mid q\in Q\}$. 
This mimicks state synchronization of $M$: any word that synchronizes all states of $M$ will drive $M^{|Q|}$ into $F$.  
%Here is one more trick: a
As mentioned above, partially blind multi-counter automata accept with final states and empty stacks, so that $M$ is synchronizable in the empty stack model if and only if $M^{|Q|}$ accepts any word.

%\todo[color=magenta]{Why do we need to hold on to this acceptance criteria in the Arb stack model when we are talking only about synchronization.}, so that 
For the arbitrary stack model, 
we have to count down (removing~$1$ from any of the the stacks sequentially) until the bottom-of-stack symbol appears on all stacks on top (at the same time),
leading to the variant $M_{\textsc{Arb}}^{|Q|}$. These are moves without reading the input (or reading arbitrary symbols at the end, this way only prolonging a possibly synchronizing word), but this does not matter, as the emptiness problem is decidable for  partially blind nondeterministic multi-counter automata. It should be clear that $M^{|Q|}_{\textsc{Arb}}$ accepts any word if and only if $M$ is synchronizable. 
 For the case \textsc{Sync-DPBCA-Same}, the counting down at the end
should be performed  in parallel for all counters instead. %, to decrement all counters at the same time. %, while for \textsc{Sync-DPBCA-Empty}, the last phase is not necessary, as we already expect all counters to be zero at the end.
\end{proof}

%\todo[inline]{Is the sketch above clear enough?}

\section{Finite-Turn DPDAs}
%For a DPDA $M = (Q, \Sigma, \Gamma, \delta, q_0, \bot, F)$ and a word $w$ we split the sequence of configurations $(q_0, \bot) \mto[w[1]] (q_1, \gamma_1) \mto[w[2]] (q_2, gamma_2) \mto[w[3]] \dots \mto[w[|w|]] (q_n, \gamma_n)$ into strokes $(q_0, \bot) \mto[w[1]] \dots \mto[w[k_1]] (q_{k_1}, \gamma_{k_1})$.
Finite-turn PDAs are introduced in~\cite{ginsburg1966finite}. From the formal language side, it is known that one-turn PDAs characterize the rather familiar family of linear context-free languages, usually defined via grammars. In our setting, the automata view is more interesting. 
We adopt the definition in~\cite{valiant1973decision}.
For a DPDA $M$ an \emph{upstroke} of $M$ is a sequence of configurations induced by an input word $w$ such that no transition decreases the stack-height. Accordingly a \emph{downstroke} of $M$ is a sequence of configurations in which no transition increases the stack-height. A stroke is either an upstroke or downstroke. Note that exchanging the top symbol of the stack is allowed in both an up- and downstroke.
A DPDA $M$ is an $n$-turn DPDA if for all $w \in \lang(M)$ the sequence of configurations induced by $w$ can be split into at most $n+1$ strokes. Especially, for 1-turn DPDAs each sequence of configurations induced by an accepting word consists of one upstroke followed by a most one downstroke.
There are two subtleties when translating this concept to synchronization: (a) there is no initial state so that there is no way to associate a stroke counter to a state, and (b) there is no language of accepted words that restricts the set of words on which
the number of strokes should be limited. 
We therefore generalize the concept of finite-turn DPDAs to finite-turn synchronization for DPDAs in the following way, which offers quite an interesting complexity landscape.
\begin{definition} \textsc{$n$-Turn-Sync-DPDA-Empty}
	\ \\
	Given: DPDA $M = (Q, \Sigma, \Gamma, \delta, q_0, \bot, F)$.\\
	Question: Is there a synchronizing word $w\in \Sigma^*$ in the empty stack model, such that 
	for all states $q \in Q$, the sequence of configurations $(q, \bot) \mto[w] (\overline{q}, \bot)$ consists of at most $n+1$ strokes?
\end{definition}
We call such a synchronizing word $w$ an \emph{$n$-turn synchronizing word} for $M$.
We define \textsc{$n$-Turn-Sync-DPDA-Same} and \textsc{$n$-Turn-Sync-DPDA-Arb} accordingly for the same stack and arbitrary stack models. Further, we extend the problem definition to real-time DCAs.
% as PDAs which only accept words for which the stack height-profile 

Motivated by the proof of Theorem~\ref{thm:DCA-Sync}, we are first reviewing the status of the inclusion problem for 1-turn DPDAs in the literature.

\begin{remark}
	The inclusion problem for 1-turn DPDAs with $\epsilon$-transitions is undecidable~\cite{DBLP:journals/tcs/Friedman76,valiant1973decision}. The intersection non-emptiness problem for real-time 1-turn non-deterministic push-down automata is also undecidable~\cite{DBLP:journals/tcs/Kim11a}. 
	The decidability of the inclusion and intersection non-emptiness problems for \emph{real-time} 1-turn \emph{deterministic} push-down automata have not been  settled in the literature; we will do so below by proving undecidability for both problems.
%	To our knowledge, the decidability of the inclusion and intersection non-emptiness problems for \emph{real-time} 1-turn \emph{deterministic} push-down automata is still open.
%	\todo[inline]{Don't we solve it below? Then we should stress this in a more prominent way, like: }
\end{remark}
We will present a reduction from the undecidable \textsc{Post Correspondence Problem} (\textsc{PCP} for short)~\cite{Pos46} to the intersection non-emptiness for real-time 1-turn DPDAs which also implies undecidability of the inclusion problem for this class since it is closed under complement.
\begin{definition}[\textsc{PCP}]
	\ \\
	Given: Two lists of input words over $\{0, 1\}$: $A = (a_1, a_2, \dots, a_n)$, $B=(b_1, b_2, \dots, b_n)$.\\
	Question: Is there a sequence of indices $i_1, i_2, \dots, i_k$ with $i_j \in [n]$ for $1 \leq j \leq k$ such that $a_{i_1}a_{i_2} \dots a_{i_k} = b_{i_1}b_{i_2}\dots b_{i_k}$?	
\end{definition}

Observe that already Post stated this problem over binary alphabets. Much later, Matyasevich and Sénizergues~\cite{MatSen2005} showed that indeed lists of length seven are sufficient for undecidability. This was recently lowered to lists of length five by Neary~\cite{DBLP:conf/stacs/Neary15}.%\todo[color=magenta]{Hier muss aber doch dann das Alphabet wieder unbeschränkt sein, sonst hab ich doch nur ne endliche Menge an nicht äquivalenten Instanzen}
%\todo[inline]{I wanted to conclude that in the construction below, the input alphabet can be assumed to be of bounded size for the undecidability. However, the construction in~\cite{MatSen2005} is built on a transformation of Volker Claus, published as Some remarks on PCP(k) and related problems, Bull. EATCS 12 (1980) 54–61. Strangely enough, this is one of the few volumes of BEATCS that are not even listed with DBLP. Maybe Klaus-Jörn has a private copy of that volume, or Tomoyuki. My own BEATCS collection starts in the 1990s only.  A more recent source is \cite{DBLP:conf/stacs/Neary15}}
%\todo[color=magenta,inline]{I think we don't need the indices in the alphabet for this proof. It was simply much more convenient having them to talk about the construction instead of dealing with some prefix-tree of the input pairs.}

\begin{theorem}
	\label{thm:1-Turn-Intersect}
	Let $M_1$ and $M_2$ be two real-time 1-turn DPDAs. Then the following problems are undecidable: Is $\lang(M_1) \cap \lang(M_2) = \emptyset$? Is $\lang(M_1) \subseteq \lang(M_2)$?
\end{theorem}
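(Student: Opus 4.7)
The plan is to reduce the undecidable \textsc{PCP} to intersection non-emptiness for real-time 1-turn DPDAs. Undecidability of inclusion then follows, since real-time 1-turn DPDAs are closed under complement: after first refining the automaton so that \emph{every} computation (not merely the accepting ones) is 1-turn---by tracking the current stack phase in the state and diverting any would-be push-after-pop to a neutral dead state---swapping final and non-final states produces a real-time 1-turn DPDA for the complement language, and then $\lang(M_1) \cap \lang(M_2) = \emptyset$ rewrites as $\lang(M_1) \subseteq \lang(\overline{M_2})$.

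Given PCP lists $A = (a_1,\dots,a_n)$ and $B = (b_1,\dots,b_n)$ over $\{0,1\}$, I would construct two real-time 1-turn DPDAs $M_1$ and $M_2$ with common input alphabet $\Sigma = \{1,\dots,n\}\cup\{\#, 0, 1\}$ accepting
\[
\lang(M_j) = \{\, i_1 i_2 \cdots i_k \,\#\, (c^{(j)}_{i_1} c^{(j)}_{i_2} \cdots c^{(j)}_{i_k})^R \mid k \geq 1,\; i_\ell \in [n]\,\},
\]
where $c^{(1)}_i = a_i$ and $c^{(2)}_i = b_i$. A common word in $\lang(M_1)\cap\lang(M_2)$ then corresponds exactly to a sequence of indices whose $A$- and $B$-concatenations coincide (after both being reversed), i.e., to a PCP solution, and vice versa; so intersection non-emptiness of these two automata is undecidable.

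Each $M_j$ is built along textbook lines: a push state that, on reading index $i\in[n]$, concatenates $c^{(j)}_i$ onto the stack top in a single move (legal because the transition function maps into $Q\times\Gamma^*$); the separator $\#$ triggers a height-preserving transition into a pop state; in the pop state, each input symbol from $\{0,1\}$ removes the matching stack top or otherwise drives the machine into a dead state. To know precisely when the pushed content is exhausted I would use the standard bookkeeping trick of marking the bottom-most pushed symbol with a distinguished copy of the stack alphabet, so that the pop of this marked symbol is recognisable and can lead to the unique accepting state. Any other malformed input is routed to the dead state.

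The remaining verifications are routine: $M_j$ is deterministic and real-time by inspection, and 1-turn because every accepting computation decomposes as a pure push phase, one neutral separator step, and a pure pop phase, i.e., one upstroke followed by one downstroke. The main obstacle I expect is not the reduction itself---this is essentially a classical push-to-stack/compare-on-pop construction---but the closure-under-complement step needed for the inclusion reduction; the \emph{all-runs-are-1-turn} refinement sketched in the first paragraph is precisely what makes the swap of final and non-final states preserve the 1-turn property.
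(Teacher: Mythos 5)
Your proposal is correct and follows essentially the same route as the paper: a reduction from \textsc{PCP} to intersection non-emptiness of real-time 1-turn DPDAs via a push-the-chosen-words/pop-and-compare construction, with inclusion handled by complementing one machine through swapping final states. The only differences are cosmetic --- the paper encodes the candidate solution redundantly as $\overline{i_1}a_{i_1}\#b_{i_1}\#\cdots\$\,a_{i_m}^R\cdots a_{i_1}^R\$$ (so that the same machines can be reused later for the synchronization reduction), whereas you encode only the index sequence followed by the reversed concatenation, and you are in fact more explicit than the paper about why complementation preserves the 1-turn and totality properties.
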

\begin{proof}
	Let $A = (a_1, a_2, \dots, a_n)$, $B=(b_1, b_2, \dots, b_n)$ be an instance of PCP. We construct from $A$ a real-time 1-turn DPDA $M_A = (Q, \{0, 1, \#, \$\} \cup [\overline{n}], \{0, 1, \bot\}, \delta, q_0, \bot, \{q_f\})$ where $[\overline{n}] = \{\overline{1}, \overline{2}, \dots, \overline{n}\}$ are marked numbers from 1 to $n$. The set $Q$ contains the start state $q_0$, the states $\overline{q_0}$, $q_{\text{check}}$ and $q_{\text{fail}}$, and the single final state $q_f$. The rest of $Q$ is a partition into state sets $Q_1, Q_2, \dots, Q_n$ such that the (deterministic partial) sub-automaton induced by $Q_i$ reads the string $a_i\#b_i$ and thereby pushes each symbol of $a_i$ on the stack, whereas symbols of the string $b_i$ leave the stack content unchanged. The sub-automaton induced by $Q_i$ is embedded into $M_A$, and thereby completed, by taking the state after reading $a_i\#b_i$ to $\overline{q_0}$ with the symbol~$\#$. 
%	Every other not defined transition on $Q_i$ maps to the state $q_{\text{fail}}$. 
	We go from $q_0$ and $\overline{q_0}$ to the initial state of the sub-automaton induced by $Q_i$ by the letter $\overline{i}$. With the letter $\$$ the state $\overline{q_0}$ maps to $q_{\text{check}}$. Here, if the input symbol equals the symbol on top of the stack, we pop the stack and stay in $q_{\text{check}}$ until we reach the bottom symbol $\bot$, in which case an additional letter~$\$$ brings us to the final state $q_f$. If the input symbol does not equal the symbol on top of the state we go to $q_{\text{fail}}$ which is a trap state for all letters. Every other not yet defined transition on $Q$ maps to the state $q_{\text{fail}}$. 
	If not stated otherwise, every transition leaves the stack content unchanged.
	
	For the list $B$ we construct a real-time 1-turn DPDA $M_B$ in a similar way except that here we push the strings $b_i$ on the stack and symbols of strings $a_i$ leave the stack unchanged.
	
%	For a word $w \in \lang(M_A) \cap \lang(M_B)$ the automaton $M_A$ checks that the sub-string between the two occurences of $\$$ is a concatenation of strings in list $A$ whereas $M_B$ checks that the same sub-string is a concatenation of strings in list $B$. As for each chosen pair of list elements the index of the pair is explicitly written before the corresponding sub-string in $w$, it is ensured that the sequence of chosen indices for both lists coincide.
	
%	\todo[inline,color=magenta]{I know the construction and proof is a bit informal, but I find it quite readable in this form. Are you OK with it?}
%\todo[inline]{I would like to see at least the languages accepted by $M_A$ and $M_B$ be explicitly written down, maybe $M_A$ even before specifying the state set $Q$.
%Is the following correct?}
The languages accepted by $M_A$ and $M_B$ are the following ones:
$$ \lang(M_A) =\{\,\overline{i_1}a_{i_1}\#b_{i_1}\#\overline{i_2}a_{i_2}\#b_{i_2}\#\cdots \overline{i_m}a_{i_m}\#b_{i_m}\# \$ a_{i_m}^R\cdots a_{i_1}^R\$\mid m\geq 1, i_j\in [n], j \leq m\}$$
$$ \lang(M_B) =\{\,\overline{i_1}a_{i_1}\#b_{i_1}\#\overline{i_2}a_{i_2}\#b_{i_2}\#\cdots \overline{i_m}a_{i_m}\#b_{i_m}\# \$ b_{i_m}^R\cdots b_{i_1}^R\$\mid m\geq 1, i_j\in [n], j \leq m\}$$
Obviously, the given PCP has a solution if and only if 
$\lang(M_A) \cap \lang(M_B)\neq \emptyset$. 

By complementing the set of final states, from $M_A$ one would arrive at a real-time 1-turn DPDA $M_A'$ such that $\lang(M_A')$ is the complement of $\lang(M_A)$, so that  the given PCP has no solution if and only if 
$\lang(M_A')\supseteq \lang(M_B)$.
%\todo[inline]{Also, one should use one sentence to prove the second claim, for instance, as follows: 
%
%By complementing the set of final states, from $M_A$ one would arrive at a real-time 1-turn DPDA $M_A'$ such that $\lang(M_A')$ is the complement of $\lang(M_A)$, so that  the given PCP has a solution if and only if  $\lang(M_A')\supseteq \lang(M_B)$.
%\textcolor{magenta}{I don't understand this... if $\lang(M_A')\supseteq \lang(M_B)$, then the PCP has \emph{NO} solution? bin ich doof???}}
\end{proof}
We will now adapt the presented construction to show that the synchronization problem for real-time one-turn DPDAs is undecidable in all three synchronization models. 
%Note that this result is not a consequence of Theorem~\ref{thm:DCA-Sync} as in the deterministic case $n$-turn DPDAs do accept by final states and not by an empty stack. Hence the construction from Theorem~\ref{thm:DCA-Sync} might increase the number of turns in order to empty the stack. 
%the automaton in the construction is not 1-turn even is the initial automaton is, i.e., we need to be able to use push-operations on the stack at any as a canceling action.\todo[color=magenta]{Translation for 'Abbruchkriterium?}
%\todo[color=magenta, inline]{Will do this tomorrow.}

\begin{theorem}
	\label{thm:1-turn-sync}
	\textsc{1-Turn-Sync-DPDA-Empty}, \textsc{1-Turn-Sync-DPDA-Same}, and \textsc{1-Turn-Sync-DPDA-Arb} are undecidable.
\end{theorem}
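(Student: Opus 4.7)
My plan is to combine the PCP reduction from Theorem~\ref{thm:1-Turn-Intersect} with the synchronization gadget used in the proof of Theorem~\ref{thm:DCA-Sync}. Given a PCP instance $(A,B)$, I would start from the real-time 1-turn DPDAs $M_A$ and $M_B$ built in the proof of Theorem~\ref{thm:1-Turn-Intersect}; recall that each of them accepts its PCP-encoded language using a single upstroke (encoding the chosen indices and pushing the $a_{i_j}$'s, resp.\ $b_{i_j}$'s) followed by a single matching downstroke that empties the stack.

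Next I would fuse $M_A$ and $M_B$ into a single DPDA $M$ whose state set is the disjoint union $Q_A\cup Q_B$, augmented with a fresh absorbing sync state $q_s$ and two ``reset-forcing'' states $q_r^A$ and $q_r^B$; I adjoin two fresh input letters $a$ and $b$. Internally to $Q_A$ and $Q_B$ the transitions on the PCP alphabet agree with those of $M_A$ and $M_B$. The letter $a$ on an empty stack routes every state of $Q_A\cup\{q_r^A\}$ to $q_0^A$ and every state of $Q_B\cup\{q_r^B\}$ to $q_0^B$, while $a$ on a non-empty stack pushes into a push-only trap. The letter $b$ on an empty stack sends the accepting states $q_f^A$ and $q_f^B$ to $q_s$ and traps any other configuration. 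The state $q_s$ self-loops on every input without changing the stack, and $q_r^A,q_r^B$ push on every letter other than a resetting $a$, so that the only way to leave $q_r^A$ (resp.\ $q_r^B$) is to read $a$ while the stack is still $\bot$.

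For the forward direction, a PCP solution $v=a_{i_1}\cdots a_{i_m}=b_{i_1}\cdots b_{i_m}$ yields the candidate synchronizing word $s=a\cdot\overline{i_1}a_{i_1}\#b_{i_1}\#\cdots\overline{i_m}a_{i_m}\#b_{i_m}\#\$v^R\$\cdot b$. Starting from any state with stack $\bot$, the leading $a$ resets to $q_0^A$ or $q_0^B$ without changing the stack; the PCP-encoded middle is then an accepting run of both $M_A$ and $M_B$, consisting of one upstroke followed by the matching downstroke that empties the stack; and the final $b$ commits to $q_s$ with empty stack. Hence $s$ is a 1-turn synchronizing word that works in all three stack models simultaneously. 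Conversely, any 1-turn synchronizing word $s$ must begin with $a$ (otherwise $q_r^A$ is stuck in its push-only trap), and since $q_s$ is reachable only through a $b$-transition out of $q_f^A$ or $q_f^B$ on empty stack, $s$ decomposes as $a\cdot w\cdot b\cdot t$ where $w\in\lang(M_A)\cap\lang(M_B)$, from which a PCP solution can be read off exactly as in Theorem~\ref{thm:1-Turn-Intersect}.

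The main obstacle I anticipate is verifying the 1-turn condition uniformly over every starting state. The trap transitions (an $a$ on a non-empty stack, a $b$ from a non-accepting configuration, the push-only loops of $q_r^A$ and $q_r^B$) must be arranged so that they neither admit shortcut 1-turn synchronizations that bypass the PCP encoding, nor force some particular run of the intended $s$ to make a second turn. A careful case analysis on the positions of the first $a$ and the first $b$ in $s$, together with stack-height bookkeeping along each initial run, should pin the middle portion $w$ of $s$ to the intersection language $\lang(M_A)\cap\lang(M_B)$ and thereby complete the reduction in all three stack models.
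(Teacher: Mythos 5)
Your proposal is correct and follows essentially the same route as the paper: it feeds the 1-turn DPDAs $M_A$ and $M_B$ from the PCP reduction of Theorem~\ref{thm:1-Turn-Intersect} into the reset-gadget construction of Theorem~\ref{thm:DCA-Sync} (fresh letters $a$ and $b$, push-only traps forcing the word to start with $a$, a sink sync state reachable only from accepting states), and argues both directions the same way, including the key observation that the 1-turn condition confines all resets to the prefix before the stack is first raised. The only cosmetic difference is that your decomposition $a\cdot w\cdot b\cdot t$ should, as the paper makes explicit, allow an initial stack-neutral segment with extra resets (the form $auavb$), which your concluding case analysis already anticipates.
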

\begin{proof}
%	We combine the constrictions from Theorem~\ref{thm:DCA-Sync} and Theorem~\ref{thm:1-Turn-Intersect}. \todo[color=magenta]{Candidate for appendix.}
	Let $M_A$ and $M_B$ be the real-time 1-turn DPDAs from the proof in Theorem~\ref{thm:1-Turn-Intersect}. 
	We take these machines as inputs for the construction in the proof of Theorem~\ref{thm:DCA-Sync} to obtain the DPDA $M$. Therefore, observe that the construction also works if the input machines are general DPDAs and not only DCAs. Further, observe that for $M_A$ and $M_B$, if for both machines the only active state is the final state, then the stack of all runs is empty and hence the stack need not be altered in the synchronizing state and all transitions here can act as the identity and leave the stack unchanged. 
	
	If $w$ is a word in $\lang(M_A) \cap \lang(M_B)$, then $awb$ synchronizes $M$ in the empty, same, and arbitrary stack models; further $awb$ is a 1-turn synchronizing word for $M$.
	Conversely, if~$w$ is a 1-turn synchronizing word for $M$, then $w$ must be of the form $avb$ or $auavb$ where $v \in \lang(M_A) \cap \lang(M_B)$ and $u$ is a word that does not change the stack, as otherwise $M$ could either not be synchronized or the 1-turn condition is violated.
%\todo[inline]{Is there any scenario where the 1-turn condition is violated? Like an $a$ in the middle of the word?} 
	To be more precise, $a$ must be the first letter of $w$ as otherwise we get stuck in $q_f^1$ and $q_f^2$. The letter $a$ resets the machines $M_A$ and $M_B$ to their initial state and can only be read when the stack is empty as otherwise the machine gets stuck. In order to reach final states, both machines $M_A$ and $M_B$ must increase and decrease the stack by reading some word $v$, but as soon as we increased the stack once, we are not allowed to reset the machine anymore due to the 1-turn condition. Hence, letters $a$ can only be read while the stack has not been changed yet.
%	\todo[color=magenta,inline]{Is this better as an explanation? (For me the details feel redundant but you were right that there is still this $auavb$ case)}
		Note that for all three stack models, the construction enforces that any 1-turn synchronizing word brings $M$ into a configuration where the stack is empty.
\end{proof}

When considering automata as language accepting devices, there is no good use of 0-turn PDAs, as they cannot exploit their stack. This becomes different if synchronization requires to end in the same configuration, which means that in particular the stack contents are identical.

\begin{theorem}
	\label{thm:0-turn-Same-Sync}
	The problem \textsc{0-Turn-Sync-DPDA-Same} is undecidable.
\end{theorem}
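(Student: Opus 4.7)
The plan is to reduce from the Post Correspondence Problem, turning the same-stack requirement itself into the PCP equality test. The key structural observation is that in a $0$-turn run the stack only grows, so once two parallel runs have pushed different sequences of symbols they can never realign; hence the only way to meet the same-stack condition from two different starting states is to arrange that the very sequences of pushes coincide.

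Given a PCP instance with lists $A=(a_1,\dots,a_n)$ and $B=(b_1,\dots,b_n)$ of non-empty words over $\{0,1\}$, I would construct a DPDA $M_{A,B}$ with just two states $Q=\{p_A,p_B\}$, input alphabet $\{\bar 1,\dots,\bar n,\$\}$, and stack alphabet $\{0,1,\bot\}$. Each index letter $\bar i$ pushes the $i$-th word of the respective list: $\delta(p_A,\bar i,X)=(p_A,X\,a_i)$ and $\delta(p_B,\bar i,X)=(p_B,X\,b_i)$ for every $X\in\{0,1,\bot\}$. The letter $\$$ plays the role of a commit: $\delta(p_A,\$,X)=\delta(p_B,\$,X)=(p_B,X)$ when $X\in\{0,1\}$, while $\delta(p_A,\$,\bot)=(p_A,\bot)$ and $\delta(p_B,\$,\bot)=(p_B,\bot)$. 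Every transition either pushes or leaves the stack unchanged, so every run is a single upstroke and the $0$-turn condition is met for free; the transition function is total on the two states.

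I would then verify the two directions. For the forward direction, any PCP solution $(i_1,\dots,i_k)$ yields the synchronizing word $w=\bar{i_1}\cdots\bar{i_k}\$$: starting in $p_A$ the machine reaches $(p_B,\bot\,a_{i_1}\cdots a_{i_k})$, starting in $p_B$ it reaches $(p_B,\bot\,b_{i_1}\cdots b_{i_k})$, and by assumption these two configurations coincide. For the converse, any synchronizing word must contain at least one $\$$ consumed on a stack top from $\{0,1\}$, for otherwise both runs remain trapped in their own state; just before the first such merging $\$$, some positive number $j$ of index letters $\bar{i_1},\dots,\bar{i_j}$ has been read, so both runs land in $p_B$ with stacks $\bot\,a_{i_1}\cdots a_{i_j}$ and $\bot\,b_{i_1}\cdots b_{i_j}$. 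From then on both runs coexist in $p_B$ and evolve identically on every further letter, so the mismatch at the commit persists to the end of $w$, and hence $a_{i_1}\cdots a_{i_j}=b_{i_1}\cdots b_{i_j}$ must already hold, i.e.\ a PCP solution exists.

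The delicate step is the converse, where one must rule out synchronization schemes that bypass the PCP check. The two suspicious patterns are reading $\$$ while the top is still $\bot$, which is harmless because $\$$ is then a self-loop in the current state and cannot merge $p_A$ and $p_B$; and interleaving several $\$$s with index letters, which is also harmless because, once both runs sit in $p_B$ with some stack difference, every subsequent letter acts identically on the two stacks and therefore cannot cancel that difference. Writing out this case analysis is routine but unavoidable; once it is in place, undecidability of PCP immediately transfers to \textsc{0-Turn-Sync-DPDA-Same}.
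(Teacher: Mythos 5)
Your reduction is correct, and it proves the theorem. The core idea is the same as the paper's -- reduce from \textsc{PCP} and exploit the fact that on a $0$-turn (hence monotone) stack the same-stack condition is exactly a string-equality test on the two pushed sequences -- but your realization is genuinely different and considerably leaner. The paper reuses its earlier machinery: it combines two multi-state machines $M_A$ and $M_B$ whose inputs spell out the candidate solution letter by letter as $\overline{i_1}a_{i_1}\#b_{i_1}\#\cdots$, together with a reset letter, fail states that poison the stack, and a dedicated sync state; this keeps the construction uniform with the proofs of the other undecidability results in the paper. You instead collapse everything to a two-state total DPDA where each index letter pushes $a_i$ from one state and $b_i$ from the other, and a single commit letter merges the states; the whole burden of the equality check is carried by the persistent stack difference. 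Your converse argument is sound: since $p_B$ is absorbing, the synchronizing state must be $p_B$, the $p_A$-run can only cross over on a $\$$ read with a non-$\bot$ top (which forces at least one index letter to have been read), and after the merge both runs append identical suffixes, so $x\neq y$ implies $xz\neq yz$ and the mismatch can never be repaired. Two small points you should make explicit: (i) you rely on \textsc{PCP} restricted to non-empty words (otherwise an empty $a_{i_1}$ could leave $\bot$ on top of the $p_A$-stack and break the ``any later $\$$ is merging'' claim), which is a standard, still undecidable restriction; and (ii) the construction only works for the \emph{same-stack} model, which is as it should be, since the paper shows the empty- and arbitrary-stack $0$-turn variants are decidable (\PSPACE-complete), and indeed your gadget is trivially synchronizable in the arbitrary-stack model.
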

\begin{proofsketch}
	The proof is by a straight-forward adaption of the previously presented constructions by getting rid of the check phase; instead, check with the same stack condition that the two words of the PCP coincide. As we never pop the stack, a 0-turn DPDA will be sufficient in the construction.
\end{proofsketch}

\noindent
The picture changes again for other 0-turn stack models, but remains intractable.

\begin{theorem}
	\label{thm:0-turn-Arb-DPDA}
	The problems \textsc{0-Turn-Sync-DCA-Empty}, \textsc{0-Turn-Sync-DCA-Same} and \textsc{0-Turn-Sync-DCA-Arb} are \PSPACE-hard.
\end{theorem}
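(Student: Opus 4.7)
The plan is to give a single log-space reduction from the \PSPACE-complete problem \textsc{DFA-Sync-Into-Subset}~\cite{DBLP:journals/ipl/Rystsov83} to all three variants of \textsc{0-Turn-Sync-DCA} at once. The driving observation is that a 0-turn DCA run, starting with an empty stack, can only push (never pop), so any push irrecoverably traps the run for the rest of its execution.

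Given a DFA $A=(Q,\Sigma,\delta)$ and a subset $S\subseteq Q$, I would construct a DCA $M$ on state set $Q\cup\{q_*\}$ and input alphabet $\Sigma\cup\{a\}$, with $q_*$ and $a$ fresh. The $\Sigma$-transitions simulate $\delta$ while the stack is empty, but turn into ``push and stay'' self-loops once the stack top is $1$, so any push irrevocably freezes the run in its current state. The letter $a$ moves every state of $S$ to $q_*$ leaving the stack unchanged, whereas every state of $Q\setminus S$ pushes a $1$ on $a$ and stays (also becoming trapped). Finally, $q_*$ is absorbing and leaves the stack unchanged on any input letter.

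For the forward direction, if $\delta(Q,u)\subseteq S$ for some $u\in\Sigma^*$, then reading $u\,a$ takes every $q\in Q$ to $q_*$ with an empty stack while leaving $q_*$ at $q_*$ with an empty stack, so $u\,a$ synchronizes $M$ in the empty-stack model, and hence also in the same- and arbitrary-stack models. Since $M$ has no pop transitions, every run is non-decreasing on the stack, so the 0-turn condition is automatically satisfied for any input word.

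For the converse I would focus on the arbitrary-stack model, which is the weakest of the three sync conditions. Any synchronizing word $w$ in that model must drive every $Q$-run into $q_*$ (the only absorbing state), and since any push traps a run in its current state and forever blocks access to $q_*$, each $Q$-run must read some $a$ while sitting in $S$. At the first occurrence of $a$ in $w$, with $u\in\Sigma^*$ the preceding prefix, all $Q$-runs must therefore be in $S$, which forces $\delta(Q,u)\subseteq S$. Because every empty-stack or same-stack synchronizing word is also an arbitrary-stack synchronizing word, this single argument establishes \PSPACE-hardness for all three variants simultaneously. The main obstacle the construction has to overcome is preventing ``cheating'' synchronization in the arbitrary-stack model, where runs could otherwise meet at the same state via different stack histories; the push-and-trap mechanism eliminates all such alternatives by freezing every deviation from the clean path through $S$ and $a$.
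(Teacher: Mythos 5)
Your proposal is correct and follows essentially the same route as the paper: a reduction from \textsc{DFA-Sync-Into-Subset} in which a fresh letter checks membership in $S$ by pushing an irremovable counter symbol for runs outside $S$, so that all three stack models and the $0$-turn condition collapse to the same criterion. The only difference is bookkeeping — you trap bad runs in place with push-and-stay self-loops and use a single sink $q_*$, whereas the paper funnels all runs through an intermediate stall state and needs two occurrences of the checking letter — but the argument is the same.
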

\begin{proof}
	We give a reduction from  \textsc{DFA-Sync-Into-Subset}. Let $A = (Q, \Sigma, \delta)$ be a DFA with $S \subseteq Q$. We construct from $A$ a DCA $M = (Q \cup \{q_{\text{stall}}, q_{\text{sync}}\}, \Sigma \cup \{a\}, \{N, \bot\}, \delta', \bot)$ where all unions are disjoint.
	For $q\in Q$, $\sigma\in\Sigma$  and $\gamma\in \{N, \bot\}$, set $\delta'(q,\sigma,\gamma)=(\delta(q,\sigma),\gamma)$.
%	The transition function $\delta'$ agrees with $\delta$ on all states in $Q$ and symbols in $\Sigma$. 
%	For $\{q_{\text{stall}}$ and $q_{\text{sync}}\}$ the letters in $\Sigma$ act as the identity and leave the stack unchanged. 
	For the letter $a$, we set for states $q \in S$, $\delta'(q, a, \bot) = (q_{\text{stall}}, \bot)$ and for states $q\in Q\backslash S$, we set $\delta'(q, a, \bot) = (q_{\text{stall}}, N)$. For $q_{\text{stall}}$ we set $\delta'(q_{\text{stall}}, a, \bot) = (q_{\text{sync}}, \bot)$ and $\delta'(q_{\text{stall}}, a, N) = (q_{\text{stall}}, N)$. All transitions not yet defined act as the identity and leave the stack unchanged.
	
	First, assume there exists a word $w \in \Sigma^*$ that synchronizes~$Q$ into~$S$ in the DFA~$A$. Then clearly $waa$ synchronizes~$M$ in the arbitrary stack model. 
	Now, assume there exists a word $w \in (\Sigma \cup \{a\})^*$ that synchronizes~$M$ in the arbitrary stack model. Then, $w$ must contain at least two occurrences of~$a$ to bring all states into the sink state $q_{\text{sync}}$. In order to reach $q_{\text{sync}}$ the states in~$Q$ need to pass through the state $q_{\text{stall}}$ but by doing so, it is noted on the stack if an active state transitions from $Q\backslash S$ into $q_{\text{stall}}$ and only the active states coming from~$S$ are allowed to pass on to $q_{\text{sync}}$. In $q_{\text{stall}}$ the stack content cannot be changed and hence the prefix of~$w$ up to the first occurrence of the letter~$a$ must have already synchronized~$Q$ into~$S$ in the DFA~$A$. Note that $M$ can only be synchronized by a word that leaves all stacks empty, hence the result follows for all three stack models.
\end{proof}
\begin{corollary}
	The problems 
\textsc{0-Turn-Sync-DPDA-Empty} and  \textsc{0-Turn-Sync-DPDA-Arb} are \PSPACE-hard.
\end{corollary}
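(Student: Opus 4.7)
The plan is to observe that the corollary is an immediate consequence of Theorem~\ref{thm:0-turn-Arb-DPDA} together with the trivial inclusion of automaton classes: every DCA is a DPDA (take the stack alphabet to be $\{N,\bot\}$, which is a legitimate finite stack alphabet). Consequently, any instance of \textsc{0-Turn-Sync-DCA-Empty} (resp.\ \textsc{0-Turn-Sync-DCA-Arb}) is also an instance of \textsc{0-Turn-Sync-DPDA-Empty} (resp.\ \textsc{0-Turn-Sync-DPDA-Arb}), and the existence of a $0$-turn synchronizing word in the respective stack model does not depend on whether we interpret the machine as a DCA or as a DPDA.

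Concretely, I would invoke the reduction from \textsc{DFA-Sync-Into-Subset} described in the proof of Theorem~\ref{thm:0-turn-Arb-DPDA}: given a DFA $A=(Q,\Sigma,\delta)$ with $S\subseteq Q$, the construction produces a DCA $M$ (with stack alphabet $\{N,\bot\}$) together with the guarantee that $M$ is $0$-turn synchronizable in the empty stack model if and only if it is $0$-turn synchronizable in the arbitrary stack model if and only if $A$ admits a word $w$ with $\delta(Q,w)\subseteq S$. Since this very $M$ can be viewed verbatim as a DPDA with stack alphabet $\{N,\bot\}$ whose transitions satisfy all DPDA constraints, and since the $0$-turn property is a syntactic restriction on stack-height evolution that is insensitive to how large the stack alphabet is declared to be, the same equivalence holds when $M$ is treated as a DPDA.

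Therefore, the same many-one polynomial-time reduction establishes \PSPACE-hardness for both \textsc{0-Turn-Sync-DPDA-Empty} and \textsc{0-Turn-Sync-DPDA-Arb}. There is no real obstacle here; the only thing to double-check is that the $0$-turn condition and the equivalence between the empty and arbitrary stack versions indeed survive the reinterpretation of $M$ as a DPDA, and both are immediate because the constructed $M$ only ever pushes the single symbol $N$ and, on every synchronizing run, ends with an empty stack (as already observed in the final sentence of the proof of Theorem~\ref{thm:0-turn-Arb-DPDA}).
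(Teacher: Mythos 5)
Your argument is exactly the paper's: the corollary follows from Theorem~\ref{thm:0-turn-Arb-DPDA} because every DCA is a DPDA, so the same reduction from \textsc{DFA-Sync-Into-Subset} applies verbatim. The extra care you take in checking that the $0$-turn condition and the empty/arbitrary-stack equivalence survive the reinterpretation is sound but not needed beyond the class inclusion the paper invokes.
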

\begin{proof}
	The claim follows from Theorem~\ref{thm:0-turn-Arb-DPDA} by inclusion of automata classes.
\end{proof}
\begin{theorem}
	\label{thm:DPDA-0-turn-PSPACE-Membership}
	%\textsc{0-Turn-Sync-DPDA-Empty} and \textsc{0-Turn-Sync-DPDA-Arb} are in \PSPACE.
$\textsc{0-Turn-Sync-DPDA-Empty}, \textsc{0-Turn-Sync-DPDA-Arb}\in \PSPACE$.
\end{theorem}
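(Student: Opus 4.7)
The plan is to reduce both problems to \textsc{DFA-Sync-From-Into-Subset} and then invoke Proposition~\ref{Sync-From-Into-Subset}. The key structural observation is that a $0$-turn run of a DPDA starting from configuration $(q, \bot)$ cannot contain any pop: an upstroke never decreases the stack by definition, while a downstroke starting at the minimum height $1$ cannot pop without either getting stuck (popping $\bot$) or violating the $0$-turn restriction in order to come back. Consequently, no stack symbol below the current top is ever touched along such a run, so the entire future behaviour of $M$ along the run is determined by the pair $(q, x)$ of current control state and current top-of-stack symbol.

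Exploiting this, I would build a polynomial-size \emph{surface} DFA $A_M$ with state set $(Q \times \Gamma) \cup \{q_{\mathrm{dead}}\}$ and the following transitions: if $\delta(q, \sigma, x) = (q', \gamma_1 \cdots \gamma_k)$ with $k \geq 1$, set $\delta_{A_M}((q, x), \sigma) = (q', \gamma_k)$; every pop transition of $M$ (i.e., $\gamma = \epsilon$) is redirected to $q_{\mathrm{dead}}$, which traps. A straightforward induction on the length of the input shows that the $A_M$-run from $(q, \bot)$ on $w$ avoids $q_{\mathrm{dead}}$ exactly when $M$ admits a $0$-turn run from $(q, \bot)$ on $w$, and in that case the final $A_M$-state records both the final control state of $M$ and its final top-of-stack symbol.

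With this correspondence in hand, \textsc{0-Turn-Sync-DPDA-Arb} becomes the question: do there exist $w \in \Sigma^*$ and $\overline{q} \in Q$ such that $\delta_{A_M}(S_0, w) \subseteq \{\overline{q}\} \times \Gamma$, where $S_0 = \{(q, \bot) \mid q \in Q\}$? For \textsc{0-Turn-Sync-DPDA-Empty}, the additional requirement that each run ends with stack $\bot$, combined with the fact that a $0$-turn run from height $1$ is non-decreasing, forces the height to stay at $1$ throughout, yielding the condition $\delta_{A_M}(S_0, w) \subseteq \{(\overline{q}, \bot)\}$ for some $\overline{q}$. For each of the at most $|Q|$ choices of $\overline{q}$, I invoke \textsc{DFA-Sync-From-Into-Subset} on $(A_M, S_0, S_1)$ with the appropriate target $S_1$; by Proposition~\ref{Sync-From-Into-Subset} each call runs in polynomial space, and polynomially many \PSPACE{} queries stay within \PSPACE.

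The main subtle point, rather than a real obstacle, is verifying the bidirectional correspondence between $0$-turn runs of $M$ and non-dead $A_M$-paths: one has to argue carefully that two runs which arrive at the same pair $(q', x')$ from different starting configurations are genuinely interchangeable from that moment on with respect to top-of-stack evolution, even though the stack contents buried beneath may differ arbitrarily. Once this equivalence is spelled out, both reductions are immediate and, together with the matching lower bound from the Corollary to Theorem~\ref{thm:0-turn-Arb-DPDA}, both problems are in fact \PSPACE-complete.
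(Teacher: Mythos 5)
Your proposal is correct and follows essentially the same route as the paper: both build a polynomial-size ``surface'' DFA on $Q\times\Gamma$ (plus a trap state for pops) that tracks only the control state and top-of-stack symbol, justify this by noting that a $0$-turn run from $(q,\bot)$ never consults anything below the top, and then reduce to \textsc{DFA-Sync-From-Into-Subset} with source $Q\times\{\bot\}$ and targets $\{q\}\times\Gamma$ (arbitrary stack) resp.\ $\{(q,\bot)\}$ (empty stack), invoking Proposition~\ref{Sync-From-Into-Subset}. The only cosmetic difference is that the paper's empty-stack case traps on pushes as well and works over $Q$ alone, which is equivalent to your choice of target set.
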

%\todo[inline]{Due to the new theorem below, we need not speak about counters at all.}
\begin{proofsketch}
%	The proof focuses on DPDAs, as DCAs are a subclass hereof. 
	In the empty stack model, the 0-turn condition forbids us to write anything on the stack at all.
Hence, consider a 0-turn DPDA  with state set $Q$ as a DFA $A$, with one additional state $q_{\text{trap}}$ which is entered whenever the DPDA wants to push or pop anything. For each $q\in Q$,
run the \PSPACE-algorithm for the \textsc{Sync-From-Into-Subset} instance $(A,Q,\{q\})$. 
%, which makes the considered DPDA basically a partial DFA for which the problem of synchronization without taking an undefined transition (called \emph{carefully synchronizing}) is in \PSPACE. 
For the arbitrary stack model, due to the 0-turn condition, for each run only the symbol on top of the stack must be ever stored.
Consider a 0-turn  DPDA  with state set $Q$ and stack alphabet $\Gamma$ as a DFA $A$ with state set $Q\times\Gamma$ and one additional state $q_{\text{trap}}$ which is entered whenever the DPDA wants to pop anything. For each $q\in Q$,
run the \PSPACE-algorithm for the \textsc{Sync-From-Into-Subset} instance $(A,Q\times\{\bot\},\{q\}\times\Gamma)$. 
%
%, which again reduced the DPDA to a partial DFA where the state set consists of pairs of states and stack symbols. For this DFA we can check whether all pair-states containing the bottom of stack symbol can be carefully synchronized into a set of pair-states where the state component is equal by guessing a path in the corresponding powerset-automaton. \todo[color=magenta,inline]{I think I don't really need the product here as powerset is enough since I do never pop the stack. Should I adapt it? + Where?\\
%	+ You are free to shorten the proof-sketch even more}
\end{proofsketch}
%\begin{proof}
%	We adapt the construction in Theorem~\ref{thm:0-turn-Arb-DPDA} by different transitions for the letter $a$ in the following way. We restrict the stack alphabet to $\{N, \bot\}$.
%	For a state $q \in S$ we set $\delta'(q, a, \bot) = (q_{\text{stall}}, \bot)$. For $q \in Q\backslash S $ we set $\delta'(q, a, \bot) = (q_{\text{stall}}, N)$.
%	For $q_{\text{stall}}$ we set $\delta'(q_{\text{stall}}, a, \bot) = (q_{})$
%\end{proof}

%\todo[color=magenta,inline]{I do not know the decidability status of intersection non-emptiness for real-time 1-Turn-DCAs. It looks to me as being too weak to be undecidable.}
%\todo[inline]{Isn't it the case that real-time 1-Turn DCAs are always partially blind? Then, you can decide intersection non-emptiness by a product automaton construction, where the product automaton has two partially blind counters. So, Petri net reachability will do. Or better: Gurari and Ibarra, in the proof of Theorem 1 in~\cite{DBLP:journals/jcss/GurariI81}.}
%\todo[inline,color=magenta]{You do not have hardness for the following problems at the moment (it does not follow from 0-turn!), but you can easily inherit it from DVCAs in the twin paper! (Here the claim is shown in a more general way), kannst auch allgemein für $n$ hiervon erben}
\begin{theorem}
	\label{thm:DCA-1-turn-PSPACE-Membership}
	 \textsc{1-Turn-Sync-DCA-Empty}, \textsc{1-Turn-Sync-DCA-Same}, and \textsc{1-Turn-Sync-DCA-Arb} in \PSPACE.
\end{theorem}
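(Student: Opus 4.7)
The plan is a nondeterministic polynomial-space algorithm that guesses a synchronizing word one letter at a time while simulating the $|Q|$ parallel runs of $M$ on this guessed input; Savitch's theorem then delivers the \PSPACE\ bound.  I classify the configurations traversed by the run starting in state $q$ into four \emph{phases}: (1) the counter is $0$ and no push has occurred yet; (2) the counter is positive and the run is still in its upstroke; (3) the counter is positive and the run is in its downstroke; and (4) the counter is back to $0$ after the downstroke.  The transition function $\delta$ consults the bottom symbol $\bot$ exactly in phases $1$ and $4$ and the symbol $1$ in phases $2$ and $3$, and the $1$-turn requirement is equivalent to demanding that the phase index of each run be monotonically non-decreasing.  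The phase changes $1\to 2$ and $2\to 3$ are triggered purely by a push from the bottom configuration, respectively by the first pop (the turn), whereas $3\to 4$ additionally depends on the counter value reaching zero.

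I augment the simulation with, for each of the $|Q|$ runs, the current state $s_q$, the current phase $\phi_q\in\{1,2,3,4\}$, and the current counter value $c_q$ stored in binary.  Provided each counter stays below a bound of $2^{\mathrm{poly}(|Q|)}$, the whole configuration $(s_q,\phi_q,c_q)_{q\in Q}$ fits in polynomial space.  The heart of the argument is a pumping lemma supplying exactly this bound: a pigeonhole argument over the $|Q|^{|Q|}\cdot 4^{|Q|}$-many \emph{abstract} configurations (states and phases only) shows that between two occurrences of the same abstract configuration, the corresponding segment of the input word can be cut out while preserving state synchronization and the $1$-turn discipline; with additional care the resulting imbalance of pushes and pops can be compensated, possibly by symmetric cuts across the $|Q|$ runs, so that the final stack condition (empty, same, or arbitrary) still holds.

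With the counter bound in hand, the \NPSPACE\ algorithm is straightforward.  Initialize $(s_q,\phi_q,c_q):=(q,1,0)$ for every $q\in Q$.  Iteratively guess the next letter $\sigma$, apply $\delta$ to each of the $|Q|$ runs using the $\bot$-entry or the $1$-entry of $\delta$ according to the current phase, update $s_q$, $\phi_q$ and $c_q$, and reject if any run would require a phase decrease (violating the $1$-turn discipline) or attempt to pop $\bot$.  Accept once all $s_q$ coincide and the stack condition of the variant is met, namely $c_q=0$ for every $q$ in \textsc{1-Turn-Sync-DCA-Empty}, $c_q=c_{q'}$ for all $q,q'\in Q$ in \textsc{1-Turn-Sync-DCA-Same}, and no extra condition in \textsc{1-Turn-Sync-DCA-Arb}.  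The space used is polynomial in $|Q|$ and $|\Sigma|$, so Savitch's theorem gives the claimed \PSPACE\ upper bound uniformly in all three stack models.  The main obstacle is the pumping argument that bounds the counters: rebalancing the pushes and pops across cuts in the \textsc{Empty} and \textsc{Same} variants requires a careful choice of cut positions (preferably in phases where the net stack change of the cut segment is zero for each run, or otherwise compensated by a matching cut in the opposite phase) so that the final stack condition survives the pumping.
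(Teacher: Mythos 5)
Your overall architecture matches the paper's: annotate each of the $|Q|$ parallel runs with a four-valued phase so that the $1$-turn discipline becomes a monotonicity condition, simulate the product, and close with Savitch. The paper packages this as a product of phase-annotated deterministic $1$-turn counter automata $M_q$ and then \emph{reduces to the emptiness problem for reversal-bounded multicounter machines}, invoking Gurari and Ibarra's Lemma~2 to obtain an explicit $(ms)^{O(m)}$ bound on the length of a shortest accepted word, from which the polynomial-space membership test follows. You instead try to establish the needed counter bound directly via a pigeonhole/pumping argument, and that is where your proof has a genuine gap rather than a mere difference in route.

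The pumping step you sketch does not go through as stated, for three interacting reasons. First, a cut is a factor of the \emph{single} input word and therefore acts on all $|Q|$ runs simultaneously; at a given input position different runs can be in different phases, so one cut may need to decrease one run's counter surplus while it necessarily also alters another run's downstroke balance, and ``symmetric cuts across the runs'' cannot be chosen independently per run. Second, a DCA is \emph{not} blind: transitions branch on whether $\bot$ or $1$ is on top, so shortening an upstroke changes the step at which a counter returns to zero during the downstroke, which can change the subsequent state trajectory of that run --- hence removing a segment between repeated abstract configurations (states and phases only) does not preserve even state synchronization, let alone the turn discipline. Third, the \textsc{Empty} and \textsc{Same} variants require exact linear equalities among the final counter values, and pigeonhole on the $|Q|^{|Q|}4^{|Q|}$ abstract configurations gives no control over the net counter effects of the excised segments; making the balancing precise amounts to solving a system of linear Diophantine constraints over the effects of pumpable cycles, which is exactly the semilinearity/integer-programming content of the Gurari--Ibarra lemma. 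You correctly identify this as ``the main obstacle,'' but the theorem stands or falls with it; either carry out that argument in full or, as the paper does, cite the reversal-bounded multicounter emptiness bound and derive the length (and hence counter) bound from it.
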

Notice that \PSPACE-hardness is inherited from corresponding results for visibly counter automata, as obtained in \cite{vpda-crossref}.
\begin{proof}Let $M = (Q , \Sigma, \Gamma, \delta, \bot)$ be a DCA. As we are looking into 1-turn behavior, any computation that we are interested would split into two phases: in the first upstroke phase, the counter is incremented or stays constant, while in the second downstroke phase, the counter is decremented or stays constant.
In particular, because the counter is 1-turn, after the first counter increment, any zero-test will always return false, while in the downstroke phase, when zero-tests return true, then all future computations cannot decrement the counter any further, so that at the end, the counter will also contain zero.
We are formalizing this intuition to create a machine that has an awareness about its phase stored in its states and that is behaving very similar.
In the rest of the proof, a \emph{spread-out variant} of a word $a_1a_2\cdots a_n$ of length $n$, with symbols $a_i$ from $\Sigma$, is any word in $a_1\Sigma a_2\Sigma \cdots a_n\Sigma$ of length $2n$.

\begin{comment}
We will now construct from $M$ and $q\in Q$ a nondeterministic partially blind 1-turn counter automaton $M_q$ that accepts precisely all word that $M$ would accept when starting in state $q$ and finishing its 1-turn computation (in any state) with the empty stack. $M_q$ has the state set $Q\times\{1,2,3,4\}$, $(q,1)$ as its initial state, and as its set of final states, take $Q\times\{1,4\}$. The transitions of $\delta_q$ work as follows.
\begin{itemize}
\item Whenever there is a transition $\delta(p,a,\bot)=(p',\bot)$ in $M$, $M_q$ can transition from $p$ to $p'$ upon reading $a$, leaving its counter untouched.
Formally, this means that  $\delta_q((p,1),a,\bot)=((p',1),\bot)$ in $M_q$. Notice that in this stage one, $M_q$ knows that the counter is zero. In particular, any word read in this stage could be accepted.
\item If there is a transition  $\delta(p,a,\bot)=(p',\bot 1)$ in $M$, $M_q$ moves into stage two. Hence, there is a  transition 
\end{itemize} 
\end{comment}
We will now  construct from $M$ and $q\in Q$ a deterministic  1-turn counter automaton $M_q$ that accepts precisely all spread-out variants of words that $M$ would accept when starting in state $q$ and finishing its 1-turn computation (in any state) with the empty stack, but that keeps track of a basic property of managing the counter in so-called stages.
 $M_q$ has the state set $Q\times\{1,2,3,4\}\times\{0,1\}$, $(q,1,0)$ as its initial state, and as its set of final states, take $Q\times\{1,4\}\times\{0\}$. The transitions of $\delta_q$ %(that can be nondeterministic, but are mostly deterministic) 
can be defined with the following semantics in mind (details are given in the appendix): 
(a) the last bit always alternates, (b) the spread-out is used to enable a \emph{deterministic} work and to make sure that the simulated machine has counter content zero if the simulating automaton~$M_q$ is in one of the states from $Q\times\{1,4\}\times\{0\}$, 
(c)  $M_q$ changes from  $Q\times\{1\}\times\{0,1\}$ to  $Q\times\{2\}\times\{0,1\}$ if the counter is no longer zero, so that the simulated machine has ``properly'' entered the upstroke phase, (d) $M_q$ changes from  $Q\times\{2\}\times\{0,1\}$ to  $Q\times\{3\}\times\{0,1\}$ if the simulated machine made its first pop, i.e., it ``properly'' entered the downstroke phase,
(e) $M_q$ changes from  $Q\times\{3\}\times\{0,1\}$ to  $Q\times\{4\}\times\{0,1\}$ if the counter has become zero again.

Now, we build the $|Q|$-fold product automaton $M^{|Q|}_{\text{Empty}}$ from all automata $M_q$ with the start state $((q_1,1,0),(q_2,1,0),\dots,(q_{|Q|},1,0))$, assuming $Q=\{q_1,\dots,q_{|Q|}\}$. This means that $M^{|Q|}_{\text{Empty}}$ has $(8|Q|)^{|Q|}$ many states and $|Q|$ many counters, each of which makes at most one turn. Now observe that a word $w$ synchronizes $M$ with empty stack, say, in state $p$, if and only if any spread-out variant of $w$ drives $M^{|Q|}_{\text{Empty}}$ into a state $((p,i_1,0),(p,i_2,0),\dots,(p,i_{|Q|},0))$ for some $i_j\in\{1,4\}$ for all $1\leq j\leq |Q|$. Now, define $\{((p,i_1,0),(p,i_2,0),\dots,(p,i_{|Q|},0))\mid p\in Q, i_j\in\{1,4\} \text{ for } 1\leq j\leq |Q|\}$ as final states of $M^{|Q|}_{\text{Empty}}$. We see that $M$ is synchronizable with empty stack if and only if $M^{|Q|}_{\text{Empty}}$ accepts any word. As Gurari and Ibarra have shown in \cite[Lemma~2]{DBLP:journals/jcss/GurariI81}, $M^{|Q|}_{\text{Empty}}$ accepts any word if and only if it accepts any word up to length $(|Q|(8|Q|)^{|Q|}|\Sigma|)^{O(|Q|)}\leq (|Q|(8|Q||\Sigma|)^{O(|Q|)^2})$,\footnote{In the cited lemma, the number of steps of the checking machine is upper-bounded by $(ms)^{O(m)}$, where
$m$ is the number of 1-turn counters and $s$ is the number of transitions of the machine.} within the same time bounds. Now, testing all these words for membership basically needs  two counters that are able to capture numbers of size $(|Q|(8|Q||\Sigma|)^{O(|Q|)^2})$, which means we need polynomial space in $|Q|$ and $|\Sigma|$ to check if $M$ is synchronizable with empty stack.

By considering $M^{|Q|}_{\text{Empty}}$ with final states $((p,i_1,0),(p,i_2,0),\dots,(p,i_{|Q|},0))$ for arbitrary $p\in Q$ and $i_j\in\{1,2,3,4\}$, this way defining an automaton $M^{|Q|}_{\text{Arb}}$, we can check if $M$ is synchronizable in the arbitrary stack model also in polynomial space with the same argument.
From $M^{|Q|}_{\text{Empty}}$, we can also construct a nondeterministic 1-turn $|Q|$-counter machine $M^{|Q|}_{\text{Same}}$ by adding a nondeterministic move from  $((p,i_1,0),(p,i_2,0),\dots,(p,i_{|Q|},0))$ for arbitrary $p\in Q$ and $i_j\in\{1,2,3,4\}$ upon reading some arbitrary but fixed $\sigma_\text{sync}\in\Sigma$ to a special state $q_\text{dec}$  in which state we loop upon reading  $\sigma_\text{sync}\in\Sigma$, decrementing all counters at the same time; finally, there is the possibility to move to $q_f$ 
(that is the only finaly state now) upon reading  $\sigma_\text{sync}\in\Sigma$ if all counters are empty. Notice that also this automaton $M^{|Q|}_{\text{Same}}$ has $(\mathcal{O}(|Q|))^{|Q|}|\Sigma|$ many transitions, so that with using \cite[Lemma~2]{DBLP:journals/jcss/GurariI81}, we can again conclude that synchronizability with same stack can be checked in polynomial space for $M$.
\end{proof}

\section{Sequential Transducers}
We will now introduce a new concept of synchronization of sequential transducers.\footnote{The definitions in the literature are not very clear for finite automata with outputs. We follow here the name used by Berstel in~\cite{DBLP:books/lib/Berstel79}; Ginsburg~\cite{Ginsburg66} called Berstel's sequential transducers \emph{generalized machines}, but used the term \emph{sequential transducer} for the nondeterministic counterpart.
} 
We call $T = (Q, \Sigma, \Gamma, q_0, \delta, F)$ a \emph{sequential transducer} (ST for short) if $Q$ is a finite set of states, $\Sigma$ is a finite input alphabet, $\Gamma$ is a finite output alphabet, $q_0$ is the start state, $\delta \colon Q \times \Sigma \to Q \times \Gamma^*$ is a total transition function, and $F$ is a set of final states. 
%We write $p \mto[\sigma] (q, \gamma)$ if $\delta(p, \sigma) = (q, \gamma)$. 
We generalize $\delta$ from input letters to words by concatenating the produced outputs, i.e., for $q, q', q'' \in Q$, $\sigma_1, \sigma_2 \in \Sigma$, $\gamma_1, \gamma_2 \in \Gamma^*$ and transitions $\delta(q, \sigma_1) = (q', \gamma_1), \delta(q', \sigma_2) = (q'', \gamma_2)$ we define $\delta(q, \sigma_1\sigma_2) = (q'', \gamma_1\gamma_2)$.
% Hence, for $p \mto[w] (q, o)$ reading the word $w \in \Sigma^*$ in state $p$ brings $p$ to state $q$ and produces the output $o$. 
We say that a word $w$ \emph{trace-synchronizes} a sequential transducer $T$ if for all states $p, q \in Q$ it holds that $\delta(p, w) = \delta(q, w)$. Intuitively, $w$ brings all states of $T$ to the same state and produces the same output on all states.
Again, we might neglect start and final states.

\begin{definition}[\sc Trace-Sync-Transducer]
	\ \\
	Given: Sequential transducer $T = (Q, \Sigma, \Gamma, \delta)$.\\
	Question: Does there exists a word $w \in \Sigma^*$ that trace-synchronizes $T$?
\end{definition}
\begin{theorem}
	The problem \textsc{Trace-Sync-Transducer} is undecidable.
\end{theorem}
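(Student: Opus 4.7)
The plan is to reduce from the Post Correspondence Problem (\textsc{PCP}). Given an instance $A = (a_1, \ldots, a_n)$, $B = (b_1, \ldots, b_n)$, I would construct a sequential transducer $T = (Q, \Sigma, \Gamma, \delta)$ with only four states that admits a trace-synchronizing word if and only if the instance has a solution. Take $Q = \{p_0, p_1, q_0, q_1\}$, input alphabet $\Sigma = [n] \cup \{\$\}$ with a fresh separator $\$$, and output alphabet $\Gamma = \{0, 1\}$. The $p$-states will accumulate $A$-outputs and the $q$-states $B$-outputs; subscript~$0$ marks an ``uncommitted'' state that self-loops on $\$$, whereas subscript~$1$ marks a ``committed'' state entered after reading at least one index letter.

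Concretely, for each index $i \in [n]$, I set $\delta(p_0, i) = \delta(p_1, i) = (p_1, a_i)$ and $\delta(q_0, i) = \delta(q_1, i) = (q_1, b_i)$. For the separator, I let $\delta(p_0, \$) = (p_0, \epsilon)$ and $\delta(q_0, \$) = (q_0, \epsilon)$ be the self-loops, while $\delta(p_1, \$) = (p_1, \epsilon)$ and $\delta(q_1, \$) = (p_1, \epsilon)$ merge the $q$-branch into the $p$-branch. The forward direction is then immediate: a PCP solution $(i_1, \ldots, i_k)$ with $k \geq 1$ yields the trace-synchronizing word $w = i_1 \cdots i_k \$$, after which all four states sit at $(p_1, a_{i_1} \cdots a_{i_k}) = (p_1, b_{i_1} \cdots b_{i_k})$.

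For the converse, I would argue that any trace-synchronizing word $w$ must have the form $\$^m i_1 \cdots i_j \$ v$ with $m \geq 0$, $j \geq 1$ and $v \in \Sigma^*$: the self-loops at $p_0$ and $q_0$ force $w$ to contain at least one index letter (otherwise $p_0$ and $q_0$ could not leave their respective states), and the absence of a $\$$ after the first block of index letters would leave the two branches stuck at $p_1$ and $q_1$. Tracing the four runs, they all end in the same state after reading $w$, and each accumulated output decomposes as an index-block prefix — either $a_{i_1} \cdots a_{i_j}$ for runs that went through $p_1$ during the block, or $b_{i_1} \cdots b_{i_j}$ for runs that went through $q_1$ — followed by a common suffix produced by running $p_1$ on $v$. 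Equality of these outputs across all four starts forces $a_{i_1} \cdots a_{i_j} = b_{i_1} \cdots b_{i_j}$, i.e., a PCP solution of positive length.

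The main obstacle is to block trivial synchronizing words such as $w = \$$ or $w = \epsilon$, which would correspond to an invalid empty PCP ``solution''; this is precisely what the self-loops at $p_0$ and $q_0$ achieve, by forcing any candidate trace-synchronizing word to commit to a non-empty index sequence before the $q$-branch can merge into the $p$-branch. Once this is handled, the undecidability of \textsc{PCP} transfers directly to \textsc{Trace-Sync-Transducer}.
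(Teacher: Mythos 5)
Your proof is correct, and it takes a genuinely different (and considerably more economical) route than the paper. The paper recycles its $0$-turn DPDA construction for \textsc{0-Turn-Sync-DPDA-Same}: the input word there spells out $\overline{i}a_i\#b_i$ blocks, the sub-automata $Q^A_i,Q^B_i$ verify them letter by letter, and stack pushes are turned into outputs; fail states emit distinguishing letters $A$ and $B$ to kill bad runs. You instead exploit the defining feature of sequential transducers --- that a single transition may emit an entire word --- to collapse the whole gadget to four states: reading index $i$ emits $a_i$ on the $p$-branch and $b_i$ on the $q$-branch, and the separator $\$$ merges $q_1$ into $p_1$. This is essentially the classical PCP encoding behind undecidability results for rational transductions, transplanted to the synchronization setting. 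The key points are all in place: the self-loops on $\$$ at $p_0,q_0$ rule out the empty index sequence; the only state reachable from all four states is $p_1$, so the first maximal index block must be followed by a $\$$; after that merge all runs are in $p_1$ and, by determinism, append an identical suffix $\gamma$ to their outputs, so right cancellation in the free monoid forces $a_{i_1}\cdots a_{i_j}=b_{i_1}\cdots b_{i_j}$ with $j\geq 1$. What the paper's route buys is uniformity --- one construction serves both the DPDA and the transducer results --- while your route buys a self-contained, four-state reduction whose correctness can be checked by direct inspection of the (few) runs. Both are valid reductions from \textsc{PCP}.
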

\begin{proof}
	%	Adaption of the PCP-construction.
	We adapt the construction of $M$ in Theorem~\ref{thm:0-turn-Same-Sync} to obtain a sequential transducer $T$ in the following way:
	Each time, we push a letter to the stack in $Q^A_i$ or $Q^B_i$, instead we now output this letter. Whenever we leave the stack unchanged, we now simply do not produce any output. For the letter $a$, we output the special letter $r$ on all states in order to indicate that the machine has been reset. For the state $q^A_{\text{fail}}$ we output the special letter $A$ and for $q^B_{\text{fail}}$ the special letter $B$, for all input symbols expect $a$.
	
	We make the following observations for any potential trace-synchronizing word $w$ for $T$: (1) $w$ needs to start with $a$. (2) $w$ needs to contain the sub-word $\#\#$. (3) The sub-word between the first occurrence of $\#\#$ and the respective last occurrence of $a$ before $\#\#$ describes a solution of the PCP instance. (4) If the PCP instance has a solution, one can construct a trace-synchronizing word for $T$ from that solution.
\end{proof}
%\textcolor{red}{Note:} It might also be interesting to look at a visibly version hereof. I.e., transducer computing homomorphisms.\\

%\begin{claim}
%\noindent
%	Notice that our undecidability results can be brought down to the  case of binary input alphabets.\todo[color=magenta]{Is that really the case? Has anybody spend some minutes thinking about this?}
%\end{claim}
\section{Prospects}

It would be interesting to look into the synchronization problem for  further automata models. In view of the undecidability results that we obtained in this paper, a special focus should be to look into deterministic automata classes with a known decidable inclusion problem, as otherwise it should be possible to adapt our undecidability proofs for synchronizability to these automata models. To make this research direction more clear: (a) There are quite efficient algorithms for the inclusion problem for so-called \emph{very simple deterministic pushdown automata}, see~\cite{WakTom93}; (b) a proper super-class of these languages are so-called \emph{NTS languages} that also have a deterministic automaton characterization\footnote{This is rather implicit in the literature, which is one of the reasons why we do not present more details here; one would have to first define the automaton model properly.} but their inclusion problem is undecidable, see~\cite{DBLP:journals/jcss/Senizergues85,DBLP:journals/jcss/BoassonS85}.
The overall aim of this research would be to find the borderline between decidable and undecidable synchronizability and, moreover, within the decidable part, to determine the complexity of this problem. A step in this research direction has been made in direction of (sub-classes of) visibly deterministic pushdown automata in~\cite{vpda-crossref}. Interestingly enough, that research line also revealed some cases where synchronizability can be decided in polynomial time, quite in contrast to the situation found in the present study.

Another approach is to look into variants of synchronization problems for DPDAs, such as restricting the length of a potential synchronizing word. It follows from the \NP-hardness of this problem for DFAs~\cite{Rys80,DBLP:journals/siamcomp/Eppstein90} and the polynomial-time solvability of the membership problem for DPDAs that for unary encoded length bounds this problem is \NP-complete for DPDAs as well, and contained in \EXP\ for binary encoded length bounds. The precise complexity of this problem for binary encoded length bounds will be a topic of future research. 

\bibliography{bib}

\newpage

\section{Appendix: Omitted Proof Details}

\begin{proof}[Proof of Proposition~\ref{Sync-From-Into-Subset}]
	The \PSPACE-hardness can easily be inherited from the \PSPACE-complete problem \textsc{DFA-Sync-Into-Subset} by setting $S_0 = Q$ and $S_1 = S$ on a \textsc{DFA-Sync-Into-Subset} instance consisting of a DFA $A=(Q, \Sigma, \delta)$, and subset $S \subseteq Q$.
	
	For membership in \PSPACE, we adapt the proof in \cite[Theorem 1.22]{San2005} and give a non-deterministic algorithm for solving \textsc{DFA-Sync-From-Into-Subset} in \NPSPACE. The claim then follows by Savitch's famous theorem showing $\PSPACE=\NPSPACE$~\cite{DBLP:journals/jcss/Savitch70}.
	Let $A=(Q, \Sigma, \delta)$, $S_0, S_1\subseteq Q$ be an instance of \textsc{DFA-Sync-From-Into-Subset}. We start on the set $Q_0 = S_0$ and guess a symbol $\sigma_0 \in \Sigma$. Applying $\sigma_0$ to all states in $Q_0$ gives us the set $Q_1 = \delta(Q_0, \sigma_0)$. We continue in this way by guessing a symbol $\sigma_i\in \Sigma$ and applying it to the current set $Q_i$ to obtain $Q_{i+1} = \delta(Q_i, \sigma_i)$ until we reach a set $Q_n$ for which $Q_n \subseteq S_1$ holds and the algorithm terminates with a positive answer. At each step, we only need to keep the current set $Q_i$ in memory such that the algorithm can be performed in polynomial space. 
\end{proof}

\begin{proof}[Proof of Theorem~\ref{thm:0-turn-Same-Sync}]
	The following construction is an easy adaption of the previously presented constructions by getting rid of the check phase and instead check that the two words of the PCP coincide with the same stack condition. As we de not pop the stack a 0-turn DPDA will be sufficient in the construction.
	
	Let $A = (a_1, a_2, \dots, a_n)$, $B=(b_1, b_2, \dots, b_n)$ be an instance of PCP. We construct from~$A$ a real-time 0-turn partial DPDA $M_A = (Q^A, \{0, 1, \#\} \cup [\overline{n}], \{0, 1, \bot\}, \delta^A, \bot)$ where $[\overline{n}] = \{\overline{1}, \overline{2}, \dots, \overline{n}\}$ are marked numbers from 1 to $n$. The set $Q^A$ contains the states $q^A_0$ and $\underline{q^A_0}$. The rest of $Q^A$ is a partition into states $Q^A_1, Q^A_2, \dots, Q^A_n$ such that the (deterministic partial) sub-automaton induced by $Q^A_i$ reads the string $a_i\#b_i$ and thereby pushes each symbol of $a_i$ on the stack whereas symbols of $b_i$ leave the stack unchanged. The sub-automaton induced by $Q^A_i$ is embedded into $M_A$, by taking the state after reading $a_i\#b_i$ to $\underline{q^A_0}$ with the symbol~$\#$. 
	%	Every other not defined transition on $Q_i$ maps to the state $q_{\text{fail}}$. 
	We go from $q^A_0$ and $\underline{q^A_0}$ to the initial state of the sub-automaton induced by $Q^A_i$ by the letter~$\overline{i}$.
	If not stated otherwise, every transition leaves the stack content unchanged.
	
	For the list $B$, we construct a real-time 0-turn partial DPDA $M_B = (Q^B, \{0, 1, \#\} \cup [\overline{n}], \{0, 1, \bot\}, \delta^B, \bot)$ in a similar way, except that here we push the strings $b_i$ on the stack and symbols of strings $a_i$ leave the stack unchanged.
	
	We combine the two machines $M_A$ and $M_B$ into a real-time 0-turn complete DPDA $M = (Q^A \cup Q^B \cup \{q^A_{\text{fail}}, q^B_{\text{fail}}, q_{\text{sync}}\}, \{0, 1, \#, a\}\cup [\overline{n}], \{0, 1, \bot\}, \delta, \bot)$ where all unions are assumed to be disjoint. The transition function $\delta$ agrees with $\delta^A$ and $\delta^B$ on all states in $Q^A \cup Q^B$ and letters in $\{0, 1, \#\}\cup[\overline{n}]$. For the letter $a$ we set for all states $q^j\in Q^j$ with $j \in \{A, B\}$, $\delta(q^j, a, \bot) = (q^j_0, \bot)$ and for $\gamma \neq \bot$, $\delta(q^j, a, \gamma) = (q^j_{\text{fail}}, \gamma)$. 
	For $\underline{q^j_0}$ and $\gamma \in \{0, 1\}$ we set $\delta(\underline{q^j_0}, \#, \gamma) = (q_{\text{sync}}, \gamma)$. For $q^j_{\text{fail}}$ we set $\delta(q^j_{\text{fail}}, a, \bot) = (q^j_0, \bot)$ and for all other transitions we stay in $q^j_{\text{fail}}$ and add a 1 on top of the stack.
	For $q_{\text{sync}}$ we set $\delta(q_{\text{sync}}, a, \bot) = (q^A_0, \bot)$ and for all other transitions, we stay in $q_{\text{sync}}$ and leave the stack unchanged.
	With all other not yet defined transition in $Q^j$, we go to $q^j_{\text{fail}}$ and add a 1 on top of the stack.
	
	Following previous arguments, it is clear that any synchronizing word $w$ for $M$ in the same stack model must start with the letter $a$ and contain at at least one sequence $\#\#$. Further, let $j \in w[|w|]$ with $w[j] = \#$ be the position of the first occurrence of a sequence $\#\#$ and let $i \in w[|w|]$ with $w[i] = a$ be the position of the last occurrence of $a$ before position $j$. Then, the sub-word $w[i+1\, ..\, j-1]$ describes a solution for the PCP instance. Conversely, a solution of the PCP can be embedded in a synchronizing word  for $M$.
%	\todo[color=magenta, inline]{Is the correctness proof sufficient? It feels like saying the same thing over and over again.}
%\todo[inline]{In principle, yes, no problem with shortcutting the construction. However, I found the last paragraph rather confusing than helpful.I understand that symbol $a$ somehow starts reading the PCP solution word. But I am not sure about the $\#$ part, and maybe it is even dangerous to assume that $a$ is read somehow in the middle, as finally the stack contents are compared and should be equal. So, it is dangerous to read $a$ when the stack is not empty.\\\textcolor{magenta}{Let's call about this.}}
\end{proof}

The following proof gives proof details that do not completely follow the proof sketch of the main text, but are rather showing the power of the problem \textsc{Sync-From-Into-Subset} as a means for showing \PSPACE-membership.

\begin{proof}[Proof of Theorem~\ref{thm:DPDA-0-turn-PSPACE-Membership}]
	We prove the claims for DPDAs, as DCAs are a subclass hereof.
	Let $M = (Q , \Sigma, \Gamma, \delta, \bot)$ be a DPDA.
	We first focus on \textsc{0-Turn-Sync-DPDA-Empty}.
	As we need to synchronize with an empty stack, the 0-turn condition forbids us to write anything on the stack at all. Hence, we only keep in $M$ transitions of the form $\delta(q, \sigma, \bot) = (q', \bot)$ for $q, q' \in Q$, $\sigma\in \Sigma$ and delete all other transitions from $M$. We call the obtained automaton $M'$. We may observe that $M'$ is basically a partial DFA and the problem of synchronizing $M$ with a 0-turn synchronizing word in the empty stack model has been reduced to synchronizing $M'$ without using an undefined transition. The latter problem is called the \textsc{Careful Synchronization} problem and is solvable in \PSPACE~\cite{DBLP:journals/mst/Martyugin14}.
Notice that the argument given in the proof sketch in the main text can also be seen as an alternative proof (or, it can be easily adapted to this end)
of \PSPACE-membership of   \textsc{Careful Synchronization}.
	
	For the problem \textsc{0-Turn-Sync-DPDA-Arb} it is sufficient for each run to only keep the symbol on top of the stack in memory, as we are not allowed to decrease the height of the stack at any time, but transitions might still depend on the symbol on top of the stack. As we have no restriction on the stack content for synchronization, we can safely forget all other symbols on the stack. 
%\todo[inline]{Wouldn't it be easier to use CS again? At least it is shorter like this.\\ 
%\textcolor{blue}{Instead of the extra symbol $\#$, we could use a Carefully Sync-From-Subset variation, as the purpose of $\#$ was to guarantee to start in all states $(q,\bot)$.}
%Hence, we can construct from $M$ a partial DFA $M'$ with state set $Q \times \Gamma\cup\{q_{\text{sync}},q_\#\}$ by basically re-interpreting transitions $\delta(q, \sigma, \gamma) = (q', \gamma')$ for $q, q' \in Q, \sigma\in\Sigma, \gamma\in \Gamma, \gamma' \in \Gamma^*$ as $\delta((q, \gamma), \sigma) = (q', \gamma'[|\gamma'|])$ and deleting transitions of the form $\delta(q, \sigma, \gamma) = (q', \epsilon)$. Further, adjoin $Q\cup\{\#\}$ to the input alphabet. For $q\in Q$ and $\gamma\in\Gamma$, define $\delta'((q,\gamma),q)=q_{\text{sync}}$ and  $\delta'((q,\gamma),\#)=(q,\bot)$, but $\delta'((p,\gamma),q)$ is undefined if $p\neq q$. In $q_{\text{sync}}$, all letters loop. In $q_\#$, only a transition on $\#$ is defined and moves to $q_{\text{sync}}$.Note that any word synchronizing $M'$ must belong to $\#\Sigma^*Q$. Hence,  the problem of synchronizing $M$ with a 0-turn synchronizing word in the arbitrary stack model has been reduced to synchronizing $M'$ without using an undefined transition. }
Hence, we can construct from $M$ a partial DFA $M'$ with state set $Q \times \Gamma$ by re-interpreting transitions $\delta(q, \sigma, \gamma) = (q', \gamma')$ for $q, q' \in Q, \sigma\in\Sigma, \gamma\in \Gamma, \gamma' \in \Gamma^*$ as $\delta((q, \gamma), \sigma) = (q', \gamma'[|\gamma'|])$ and deleting transitions of the form $\delta(q, \sigma, \gamma) = (q', \epsilon)$.
The problem of synchronizing $M$ with a 0-turn synchronizing word in the arbitrary stack model has now been reduced to finding a word that brings all states $(q', \bot)$ of $M'$ with $q' \in Q$ into one of the sets $S_q=\{(q, \gamma)\mid \gamma \in \Gamma\}$ for $q \in Q$ without using an undefined transition.
	We can solve this problem using polynomial space by guessing a path in the $|Q|$-fold product automaton $M'^{|Q|}$. The automaton $M'^{|Q|}$ with state set $(Q \times \Gamma)^{|Q|}$, consisting of $|Q|$-tuples of states in $Q \times \Gamma$, and alphabet $\Sigma$, is defined based on $M'$ by simulating the transition function $\delta$ of $M'$ on every single state in a $|Q|$-tuple state of $M'^{|Q|}$ independently, yielding the transition function $\delta^{|Q|}$ of $M'^{|Q|}$. Here, $\delta^{|Q|}$ is only defined on a $|Q|$-tuple if and only if $\delta$ is defined on every state of this tuple. Clearly, the size of $M^{|Q|}$ is $\mathcal{O}((|Q||\Gamma|)^{|Q|})$ and we can guess a path from the state $((q_1, \bot), (q_2, \bot), \dots, (q_n, \bot))$ for $Q= \{q_1, q_2, \dots, q_n\}$ to one of the sets $S_q$ for $q \in Q$ using space $\mathcal{O}(\log(|Q||\Gamma|)|Q|)$. We conclude the proof with Savitch's famous theorem proving $\PSPACE=\NPSPACE$~\cite{DBLP:journals/jcss/Savitch70}.
%	
%	Let $M = (Q , \Sigma, \Gamma, \delta, \bot)$ be a DPDA. We construct from $M$ some product automaton $M^Q$ of $M$ being a DFA with state set $Q^{|Q|}$, consisting of $|Q|$-tuples of states of $M$, and alphabet $\Sigma$. The transition function $\delta^Q$ of $M^Q$ simulates $\delta$ independently on every state in a $|Q|$-tuple.
%%	For the empty stack model, $M^Q$ is a partial DFA and does not have a stack as due to the 0-turn condition we 
%	We construct $M^Q$ in a way that it accepts exactly the 0-turn synchronizing words for $M$ in the empty stack and arbitrary stack model, respectively.
%	Then, we guess an accepting path in $M^Q$ by keeping only the active state (and the symbol on top of the stack in the arbitrary model) without storing the whole automaton.
%	
%	We first focus on \textsc{0-Turn-Sync-DPDA-Empty}. As we need to synchronize with an empty stack, the 0-turn condition forbids us to write anything on the stack at all. Hence, we can define $M^Q$ as a partial DFA by keeping in $M$ all transitions $\delta(q, \sigma, \bot) = (q', \bot)$ for $q, q' \in Q$, $\sigma\in \Sigma$ and deleting all other transitions. For $M^Q$, $\delta^Q$ is only defined on a $|Q|$-tuple if $\delta$ is defined on all states in this tuple.
%	
%	For \textsc{0-Turn-Sync-DPDA-Arb} we enlarge each state of $M^Q$ with a $|Q|$-tuple of stack symbols, yielding the state set $Q^{|Q|} \times \Gamma^{|Q|}$.
\end{proof}

\begin{proof}[Details of the construction of the transition function of $M_q$ in the proof of Theorem~\ref{thm:DCA-1-turn-PSPACE-Membership}]
We now explain how to build the transition function of $M_q$ from the specification of $M$.
%work as follows.
\begin{itemize}
\item Whenever there is a transition $\delta(p,a,\bot)=(p',\bot)$ in $M$, $M_q$ can transition from $p$ to $p'$ upon reading $a$, leaving its counter untouched.
Formally, this means that  $\delta_q((p,1,0),a,\bot)=((p',1,1),\bot)$ in $M_q$. Moreover, for any state $r\in Q$ and any letter $b\in\Sigma$,  $\delta_q((r,1,1),b,\bot)=((r,1,0),\bot)$ in $M_q$.
Notice that in this stage one, $M_q$ knows that the counter is zero. In particular, any word read in this stage could be accepted.
\item If there is a transition  $\delta(p,a,\bot)=(p',\bot 1^\ell)$ with $\ell>0$ in $M$, $M_q$ moves into stage two. Hence, there is a  transition  $\delta_q((p,1,0),a,\bot)=((p',2,1),\bot 1^\ell)$ in $M_q$.
This certifies that we have truly entered a proper upstroke phase. 
\item We can check that we are in the proper upstroke phase by defining the transitions as $\delta_q((r,2,1),b,1)=((r,2,0),1)$ in $M_q$ for any $r\in Q$ and any $b\in\Sigma$. 
\item Transitions of the form  $\delta(p,a,1)=(p', 1^\ell)$ with $\ell> 0$ in $M$ let $M_q$ stay in stage two. Hence, there is a  transition  $\delta_q((p,1,0),a,1)=((p',2,1), 1^\ell)$ in $M_q$.
\item However, transitions of the form  $\delta(p,a,1)=(p', \varepsilon)$  in $M$ let $M_q$ move in stage three. We have now   truly entered a proper downstroke phase. Hence, there is a  transition  $\delta_q((p,2,0),a,1)=((p',3,1), \varepsilon)$ in $M_q$.
\item We can check that we are in the proper downstroke phase by setting the transition function as $\delta_q((r,3,1),b,1)=((r,3,0),1)$ in $M_q$ for any $r\in Q$ and any $b\in\Sigma$. 
\item We stay within stage three by introducing $\delta_q((p,3,0),a,1)=((p',3,1), 1^\ell)$ in $M_q$ for rules  $\delta(p,a,1)=(p', 1^\ell)$ with $\ell\in\{0,1\}$.
\item We will move into stage four once we have arrived at an empty stack again. This is realized by  having transitions $\delta_q((r,3,1),b,\bot)=((r,4,0),\bot)$ in $M_q$ for any $r\in Q$ and any $b\in\Sigma$. 
\item Whenever there is a transition $\delta(p,a,\bot)=(p',\bot)$ in $M$, $M_q$ can transition from $p$ to $p'$ upon reading $a$, leaving its counter untouched, i.e.,  $\delta_q((p,4,0),a,\bot)=((p',4,1),\bot)$ in $M_q$. Moreover, for any state $r\in Q$ and any letter $b\in\Sigma$,  $\delta_q((r,4,1),b,\bot)=((r,4,0),\bot)$ in $M_q$. Notice that in this stage four, $M_q$ knows again that the counter is zero. In particular, any word read in this stage could be accepted.
\end{itemize} 
\end{proof}

\newpage

\section{Appendix: Discussing Blind Counter Automata}

Recall that a counter automaton is blind if it has a counter that stores some integer (that can also be negative), but it
can test emptiness only at the very end of its computation, being hence part of the definition of the accepted language.
If we want to model this behavior similar to our previous definitions, we would therefore consider an element from $Q\times\mathbb
Z$ as a blind counter automaton configuration. The transition function $\delta$ would map $Q\times \Sigma$ to $Q\times \{-1,0,1\}$, meaning that if $\delta(q,\sigma)=(q',z)$, then $(q,c)\mto[\sigma] (q',c+z)$. The remainder of this formalization is standard and hence omitted.
Alternatively, we could try to  formalize this behavior also as a pushdown automaton. But then, the automaton must store in an additional bit if the counter stores a positive or a negative number. This also means that depending on this bit of information, incrementing the counter could either mean pushing~$1$ onto the stack or popping~$1$ from the stack, and (in a reversed fashion), this is also true when decrementing the counter. Clearly, this additional bit of information cannot be tested by the machine itself upon processing, as this would destroy the blindness condition; yet, this bit influences the formal processing, so that any formalization of blind counter automata (and their synchronizability) as special push-down automata looks somewhat strange.

How can we use this model for synchronization purposes?
First, observe that as the counter contents never influences the run of a blind automaton, synchronization with the arbitrary stack model would mean  just synchronization of the underlying DFA.

This looks different for the empty stack model. Here, we (have to) use a product automaton construction as in the proof of Theorem~\ref{thm:DPBCA-Sync} to reduce this synchronization problem to the emptiness problem of blind (real-time) deterministic multi-counter machines. By adding the possibility to decrement all counters upon reading a special symbol at the end, the emptiness problem of blind (real-time) deterministic multi-counter machines can also be used to show the decidability of the synchronization problem for deterministic blind counter automata. We summarize our observations as follows.

\begin{proposition}The problems \textsc{Sync-DBCA-Empty}, \textsc{Sync-DBCA-Same}, and \textsc{Sync-DBCA-Arb} are decidable, the latter even in polynomial time.
\end{proposition}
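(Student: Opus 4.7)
The plan is to treat the three stack models separately, exploiting what is special about blind counters: transitions never depend on the counter content, and the counter ranges over all of $\mathbb{Z}$.

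For \textsc{Sync-DBCA-Arb} the observation already made in the text does all the work. Since a blind counter never influences the choice of transition, the run of the DBCA on any input word, restricted to the state component, is completely determined by the underlying DFA obtained by discarding the counter. Hence a word synchronizes the DBCA in the arbitrary stack model if and only if it synchronizes this underlying DFA. DFA synchronizability is decidable in polynomial time, so \textsc{Sync-DBCA-Arb} lies in \PTIME.

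For \textsc{Sync-DBCA-Empty} I would reuse the $|Q|$-fold product construction used for \textsc{Sync-DPBCA-Empty}. Starting in the state $(q_0,\dots,q_{|Q|-1})$ with all $|Q|$ counters initialised to $0$, the product $M^{|Q|}$ is a deterministic blind $|Q|$-counter automaton that faithfully simulates each possible initial state of $M$ in parallel (this is sound precisely because transitions of $M$ do not depend on the counter). Declaring $\{(q,\dots,q)\mid q\in Q\}$ final, a word $w$ synchronizes $M$ in the empty stack model exactly when $w$ is accepted by $M^{|Q|}$ under the usual blind acceptance condition of reaching a final state with all counters equal to zero. The latter is the emptiness problem for (deterministic, hence in particular nondeterministic) blind multi-counter automata, which is known to be decidable (it reduces to reachability in $\mathbb{Z}$-VASS, and a fortiori to Petri-net reachability as cited in the paper). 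Decidability of \textsc{Sync-DBCA-Empty} follows.

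For \textsc{Sync-DBCA-Same} I would modify the target condition rather than the construction. After the same $|Q|$-fold product, synchronizing in the same stack model means reaching a state of the form $(q,\dots,q)$ with all $|Q|$ counters carrying the \emph{same} integer, not necessarily zero. Since blind counters live in $\mathbb{Z}$, I can build a blind $(|Q|-1)$-counter automaton $M^{|Q|}_{\textsc{Same}}$ whose $i$-th counter tracks the difference between the $(i{+}1)$-st and the first counter of $M^{|Q|}$; this is implementable by blind counter updates because every transition of $M$ adds the same increment to every copy of the counter. Then $M$ is synchronizable in the same stack model iff $M^{|Q|}_{\textsc{Same}}$ reaches a synchronized control state with all its difference counters being zero, again an instance of blind multi-counter emptiness, hence decidable.

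The only subtle point, which I view as the main obstacle, is to make the argument for the empty and same stack models rigorous given the awkward modelling issue raised earlier: blind counters range over $\mathbb{Z}$ but our pushdown formalism only has a nonnegative stack with a bottom symbol. Working consistently with the integer-counter presentation sketched in the appendix avoids this problem, and one must then cite the correct decidability result (emptiness of blind nondeterministic multi-counter machines, equivalently reachability in $\mathbb{Z}$-VASS) rather than the Petri-net-based one invoked for DPBCAs. No complexity claim beyond decidability is asserted for the empty and same stack models, which matches what these reductions naturally provide.
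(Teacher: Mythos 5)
Your proposal is correct and follows the paper's route in all three cases: the arbitrary-stack model collapses to synchronization of the underlying DFA (hence \PTIME), and the empty-stack model is handled by the $|Q|$-fold product automaton reduced to the emptiness problem for blind deterministic multi-counter machines. The only divergence is in the same-stack case: the paper appends a terminal phase in which all counters are shifted by the same amount per step and asks whether they can simultaneously be driven to zero, whereas you replace the $|Q|$ counters by $|Q|-1$ difference counters and test those for zero directly. The two gadgets are equivalent and both land in blind multi-counter emptiness; yours avoids the extra end-marker phase, while the paper's keeps the product automaton untouched. One sentence of your justification is off, however: it is not true that ``every transition of $M$ adds the same increment to every copy of the counter'' --- the components of the product sit in different control states and therefore apply different transitions of $M$, with possibly different increments (if they all received the same increment, the differences would be constant and the reduction would be vacuous). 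What actually saves the construction is that each component's increment is determined by its control state and the input letter, never by the counter value, so the difference of two such increments is a bounded, counter-independent update that a blind counter over $\mathbb{Z}$ can perform. With that sentence repaired, the argument is complete, and your closing remark about working in the integer-counter presentation (rather than the bottom-marked stack formalism) to cite the right emptiness result is exactly the caveat the paper itself raises.
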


As a final remark, let us mention that already Greibach~\cite{greibach1978blind} observed the relations between (quasi-realtime) blind counter automata and reversal-bounded counter automata (reversals is just another name for turns), which also links to our discussion of finite-turn automata. However, when making a blind counter automaton reversal-bounded, the number of counters is increased, so that we cannot compare these models for a fixed number of  counters, which is of course our focus when studying language classes (and automata models) between regular and context-free.

\end{document}